\numberwithin{equation}{section} 
\theoremstyle{plain}
\newtheorem{theo+}           {Theorem}      [section]
\newtheorem{prop+}  [theo+]  {Proposition}
\newtheorem{coro+}  [theo+]  {Corollary}
\newtheorem{lemm+}  [theo+]  {Lemma}
\newtheorem{defi+}  [theo+]  {Definition}
\newtheorem{conj+}  [theo+]  {Conjecture}
\theoremstyle{definition}
\newtheorem{rema+}  [theo+]  {Remark}
\newtheorem{prob+}  [theo+]  {Problem}
\newtheorem{exam+}  [theo+]  {Example}
\newenvironment{theorem}{\begin{theo+}}{\end{theo+}}
\newenvironment{proposition}{\begin{prop+}}{\end{prop+}}
\newenvironment{lemma}{\begin{lemm+}}{\end{lemm+}}
\newcommand{\om}{\omega}
\newcommand{\tha}{\theta}
\newcommand{\Om}{\Omega}
\newcommand{\ti}{\textup i}
\newcommand{\id}{\operatorname{id}}
\begin{document}

\baselineskip 18pt
\larger[2]
\title[Special polynomials related to the eight-vertex model II]
{Special polynomials related to the supersymmetric eight-vertex model. II. Schr\"odinger equation.} 
\author{Hjalmar Rosengren}
\address
{Department of Mathematical Sciences
\\ Chalmers University of Technology and University of Gothenburg\\SE-412~96 G\"oteborg, Sweden}
\email{hjalmar@chalmers.se}
\urladdr{http://www.math.chalmers.se/{\textasciitilde}hjalmar}

\thanks{Research  supported by the Swedish Science Research
Council (Vetenskapsr\aa det)}

\begin{abstract}
We show that   symmetric polynomials
previously introduced by the author satisfy a certain differential equation.
After a change of variables, it can be written as
 a  non-stationary Schr\"odinger equation with elliptic potential, which is closely related to the
 Knizhnik--Zamolodchikov--Bernard equation and to the canonical quantization of the Painlev\'e VI equation. In a subsequent paper, this will be used to construct a four-dimensional lattice of tau functions for Painlev\'e VI. 
\end{abstract}

\maketitle        


\section{Introduction}

The present work is the second part of a series, devoted to the study 
of certain  symmetric polynomials related to the eight-vertex model and other elliptic solvable lattice models of statistical mechanics.
In the first part \cite{r1a}, we introduced these polynomials and 
studied their behaviour at special parameter values corresponding to cusps of the relevant modular group $\Gamma_0(12)$. In the present work, we continue this study by proving that our polynomials solve a non-stationary Schr\"odinger equation with elliptic potential. 

To be more precise, let $m$ be a non-negative integer and $\mathbf k\in\mathbb Z^4$ be such that $|\mathbf k|+m=2n$ is even (throughout, $|\mathbf k|=\sum_j k_j$). In \cite{r1a}, we introduced a certain space $\Theta_n^{\mathbf k}$ 
of quasi-periodic meromorphic functions; see \S \ref{ps}. 
We proved that $\dim\Theta_n^{\mathbf k}=m$, and constructed 
explicit symmetric rational functions $T_n^{(\mathbf k)}$ of $m$ variables such that, up to an elementary factor and a change of variables, 
$T_n^{(\mathbf k)}$ spans the one-dimensional space  $(\Theta_n^{\mathbf k})^{\wedge m}$. Since the denominator in $T_n^{(\mathbf k)}$ is elementary, they are essentially symmetric polynomials. 

The functions $T_n^{(\mathbf k)}$ include as special cases various polynomials related (sometimes conjecturally) to elliptic lattice models of statistical mechanics, at the  parameter values $\Delta=\pm 1/2$. Indeed, they appear as the ground state eigenvalue for the $Q$-operator of the eight-vertex model \cite{bm1,bm2}, 
in expressions for 
the domain wall partition function of the eight-vertex-solid-on-solid and three-colour models 
\cite{r0,r} and in expressions for ground state eigenvectors of the  XYZ spin chain \cite{bm4,ras,zj} and related chains \cite{bh,fh,h}.

In the present paper, we  show that the elements in the space  
$(\Theta_n^{\mathbf k})^{\wedge m}$  satisfy
 a non-stationary Schr\"odinger equation
with elliptic potential, see Theorem \ref{semt}. 
When $m=1$,  this equation takes the form
\begin{equation}\label{nse}\psi_t=\frac 12\,\psi_{xx}-V\psi, \end{equation}
where  $V$ is the Darboux potential \cite{d,i,v}
$$V(x,t)=\sum_{j=0}^3\frac{k_j(k_j+1)}{2}\,\wp(x-\gamma_j|1,2\pi\ti t), $$
with $\gamma_j$  the four half-periods of the $\wp$-function.
The $m$-variable case is simply the equation for $m$ non-interacting
particles with the same potential.
The case $m=1$, $\mathbf k=(0,n,n,-1)$ corresponds to
 the non-stationary Lam\'e equation in \cite{bm1}.

The equation \eqref{nse} has appeared in the literature in several contexts.
It is the canonical quantization
of  Painlev\'e VI, and has been studied from this viewpoint by Nagoya \cite{na,na2}, Suleimanov \cite{su,su2} and Zabrodin and Zotov \cite{z}, see
\cite{no,zs} for related work. 
To explain this, recall the elliptic form of  Painlev\'e VI,
$$\frac{d^2 q}{dt^2}=\sum_{j=0}^3\nu_j\wp'(q-\gamma_j|1,2\pi\ti t). $$
It  is  equivalent to the Hamiltonian system
$$\frac{d q}{d t}=\frac{\partial H}{\partial p},\qquad \frac{d p}{d t}=-\frac{\partial H}{\partial q}, $$
where
$$H=\frac{p^2}{2}-\sum_{j=0}^3\nu_j\wp(q-\gamma_j|1,2\pi\ti t). $$
In imaginary time, 
the canonical quantization of this system is the \emph{quantum} Painlev\'e VI
equation
\begin{equation}\label{cqs}\hbar\psi_t=\frac 12\,{\psi_{xx}}-\sum_{j=0}^3\nu_j\wp(x-\gamma_j|1,2\pi\ti t)\psi,
\end{equation}
which for $\hbar=1$ reduces to  \eqref{nse}, with $\nu_j=k_j(k_j+1)/2$.

The equation \eqref{cqs} also appears in conformal field theory 
and the representation theory of affine Lie algebras. 
At least under some extra condition on the parameters, it is the one-dimensional case of the Knizhnik--Zamolodchikov--Bernard heat equation satisfied by conformal blocks of Wess--Zumino--Witten theory on a torus \cite{b,ek}. 
The general case also appears in conformal field theory \cite{fl}.
Recently, Kolb  \cite{ko} identified
the corresponding Schr\"odinger operator with the radial part of the Casimir operator for $\widehat{sl}(2)$ with respect to zonal spherical functions. Interestingly,
the condition $\hbar =1$ corresponds to central charge $c=1$,  a case  known to have close connections to Painlev\'e VI, see e.g.\ \cite{er,gil}. 
Finally, 
we mention the recent paper \cite{lt}, where a more general equation, representing interacting particles, is used to study the Inozemtsev model.

By a change of variables, the Schr\"odinger equation can be transformed
to an algebraic differential equation for the functions $T_n^{(\mathbf k)}$, see
 Theorem \ref{pdet}. Special cases of this equation have been obtained by Bazhanov and Mangazeev \cite{bm1,bm4} (without complete proof) and  Zinn-Justin \cite{zj}. 

An important application of the Schr\"odinger equation is that,
when combined with minor relations for the determinants defining $T_n^{(\mathbf k)}$, it can be used to derive bilinear relations for the polynomials.
Although many such relations exist, in the present paper we just give two examples, see Theorem \ref{rt}.

In the next paper of this series   \cite{r2}, 
 Theorem \ref{rt} will be used to identify  the case $m=0$ of
$T_n^{(\mathbf k)}$ with tau functions of Painlev\'e VI, obtained from one of Picard's solutions by acting with the full four-dimensional lattice of B\"acklund transformations. These tau functions can be obtained from
$m=1$ instances of $T_n^{(\mathbf k)}$, that is, from solutions
to \eqref{nse}, by specializing the variable to a half period. 
(More precisely, the solutions to
 \eqref{nse} that we construct
 satisfy $\psi(x)=\mathcal O((x-\gamma_j)^{k_j+1})$;
we claim that there  is a natural rescaling of these solutions so that their
 leading behaviour at the points $x=\gamma_j$ is given  by Painlev\'e  tau functions.) A similar observation was made in \cite{na2} for another class of solutions. Presumably, this phenomenon is linked to the close
relation between \eqref{cqs} with $\hbar=1$ and the Lax representation of Painlev\'e VI described in \cite{su,z}. 

{\bf Acknowledgements:} I would like to thank Vladimir Bazhanov, Stefan Kolb, Vladimir Mangazeev, Dmitrii Novikov, Bulat Suleimanov and Paul Zinn-Justin for interesting discussions and correspondence.

\section{Preliminaries}\label{ps}

We recall some relevant facts from \cite{r1a}.
For $\tau$ fixed in the upper half-plane, we write
 $p=e^{\pi\ti\tau}$. 
 We will also write
 $\omega=e^{2\pi\ti/3}$.
 We will use the notation
\begin{align*}(x;p)_\infty&=\prod_{j=0}^\infty(1-xp^j),\\
\theta(x;p)&=(x;p)_\infty(p/x;p)_\infty.\end{align*}
Repeated variables are used as a short-hand for products; for instance,
$$\theta(a,\pm b;p)=\theta(a;p)\theta(b;p)\theta(-b;p). $$

The function
 \begin{equation}\label{pse}\psi(z)=\psi(z,\tau)=p^{\frac 1{12}}(p^2;p^2)_\infty e^{-\pi \ti z}\theta(e^{2\pi \ti z},\pm pe^{2\pi \ti z};p^2)\end{equation}
satisfies
\begin{equation}\label{pqp}\psi(z+1)=\psi(-z)=-\psi(z),\qquad  \psi(z+\tau)=e^{-3\pi \ti(\tau+2z)}\psi(z),\end{equation}
 the heat equation
\begin{equation}\label{ops}12\pi \ti\frac{\partial\psi}{\partial\tau}=\frac{\partial^2\psi}{\partial z^2}\end{equation}
and 
 \begin{equation}\label{pss}\psi(z)=\psi\left(z+\frac 13\right)+\psi\left(z-\frac 13\right). \end{equation}

We will write
\begin{align*}x(z)&=x(z,\tau)=\frac{\theta(- p\omega;p^2)^2\theta(\omega e^{\pm 2\pi \ti z};p^2)}{\theta(-\omega;p^2)^2\theta( p\omega e^{\pm 2\pi \ti z};p^2)}, \\
\zeta&=\zeta(\tau)=\frac{\omega^2\theta(-1,- p\omega;p^2)}{\theta(- 
p,-\omega;p^2)}. \end{align*}
The function $x$ generates the field of even elliptic functions with periods $(1,\tau)$. Moreover, $\tau\mapsto\zeta(2\tau)$ generates the field of modular functions for the group $\Gamma_0(12)$.

We will need the identity
\begin{equation}\label{xd} x(z)- x(w)=-\frac{\om\tha( p, p\om;p^2)\tha(- p\om
;p^2)^2}{e^{2\pi \ti z}\tha(-\om;p^2)^2}\frac{\tha(e^{2\pi \ti(z\pm w)};p^2)}{\tha( p\om e^{\pm 2\pi \ti z}, p\om e^{\pm 2\pi \ti w};p^2)}\end{equation}
and its limit case
\begin{equation}\label{xp} x'(z)=\frac{2\pi \ti\om(p^2;p^2)_\infty^2\tha( p, p\om;p^2)\tha(- p\om
;p^2)^2}{\tha(-\om;p^2)^2}\frac{e^{-2\pi \ti z}\tha(e^{4\pi \ti z};p^2)}{\tha( p\om e^{\pm 2\pi \ti z};p^2)^2}.\end{equation}
Denoting the half-periods in $\mathbb Z+\tau\mathbb Z$ by
$$\gamma_0=0,\qquad \gamma_1=\frac\tau 2,\qquad \gamma_2=\frac\tau2+\frac 12,\qquad \gamma_3=\frac 12,$$ 
the values $\xi_j=x(\gamma_j)$ and $\eta_j=x(\gamma_j+1/3)$ are given by
\begin{subequations}\label{xv}
\begin{align}
\xi_0&=2\zeta+1,& \xi_1&=\frac{\zeta}{\zeta+2},
& \xi_2&=\frac{\zeta(2\zeta+1)}{\zeta+2},
& \xi_3&=1,\\
\eta_0&=0, & \eta_1&=\infty, &\eta_2&=\frac{2\zeta+1}{\zeta+2},
& \eta_3&=\zeta.
\end{align}
\end{subequations}
Moreover \cite[Lemma 9.1]{r},
\begin{subequations}\label{zp}
 \begin{align} 
\zeta+1&=-\frac{\tha(p,-p\om;p^2)}{\tha(-p,p\om;p^2)}, \\
 \zeta-1&=\frac{\tha(p,p\om;p^2)\tha(\om;p^2)^2}{\tha(-p,-p\om;p^2)\tha(-\om;p^2)^2},\\
\zeta+2&=p\frac{\tha(-1,-\om;p^2)\tha(\om;p^2)^2}{\tha(-p,-p\om ;p^2)
 \tha(p\om;p^2)^2}, \\
 2\zeta+1&=\frac{\tha(-p\om,\om;p^2)^2}{\tha(-\om,p\om;p^2
 )^2}. 
 \end{align}
\end{subequations}


For $n\in\mathbb Z$ and $\mathbf k=(k_0,k_1,k_2,k_3)\in\mathbb Z^4$,
such that $2n\geq|\mathbf k|=\sum_jk_j$, we  define 
 a function space  $\Theta_{n}^{\mathbf k}$ as follows \cite[Lemma 2.3]{r1a}.
The elements in the space are  meromorphic functions, which are analytic
outside the lattice $\frac 16\mathbb Z+ \frac\tau 2\mathbb Z$, satisfy
\begin{subequations}\label{vde}
\begin{equation}\label{fqp}f(z+1)=f(z),\qquad  f(z+\tau)=e^{-6\pi \ti n(\tau+2z)}f(z),\qquad f(-z)=-f(z),\end{equation}
\begin{equation}\label{fe}f(z)+f\left(z+\frac 13\right)+f\left(z-\frac 13\right)=0 \end{equation}
\end{subequations}
and, for $j=0,1,2,3$, 
\begin{equation}\label{oca}\lim_{z\rightarrow \gamma_j}(z-\gamma_j)^{1-2k_j}f(z)=\lim_{z\rightarrow\gamma_j}(z-\gamma_j)^2\left(f\left(z+\frac 13\right)-f\left(z-\frac 13\right)\right)=0. \end{equation}


 Writing $m=2n-|\mathbf k|$, we have proved that $\dim\Theta_n^{\mathbf k}=m$ \cite[Thm.\ 2.4]{r1a}. Moreover, realizing the maximal exterior power
$(\Theta_n^{\mathbf k})^{\wedge m}$  as a space of anti-symmetric functions in $z_1,\dots,z_m$, it  is spanned by 
\begin{equation}\label{ukl}\prod_{j=1}^{m}\left(M_n(z_j){\prod_{l=0}^3(x_j-\xi_l)^{k_l}}\right) \Delta(x_1,\dots,x_{m})T_{n}^{(\mathbf k)}(x_1,\dots,x_{m}), \end{equation}
where 
\begin{equation}\label{m}M_n(z)=e^{-2\pi\ti z}\tha(e^{4\pi\ti z};p^2)\tha(\om pe^{\pm 2\pi\ti z};p^2)^{3n-2},\end{equation}
$x_j=x(z_j)$, $\Delta(\mathbf x)=\prod_{i<j}(x_j-x_i)$ and $T_{n}^{(\mathbf k)}$ is a certain symmetric rational function, depending also rationally on the parameter $\zeta$.

To describe  the construction of  $T_{n}^{(\mathbf k)}$, 
we start with the case
\begin{multline}\label{gdf} T_n^{(0,0,0,0)}(x_1,\dots,x_{2n})\\
=\frac{\prod_{i,j=1}^nG(x_i,x_{n+j})}{\Delta(x_1,\dots,x_n)\Delta(x_{n+1},\dots,x_{2n})}\,\det_{1\leq i,j\leq n}\left(\frac{1}{G(x_i,x_{n+j})}\right),
\end{multline}
where
$$G(x,y)=(\zeta+2)xy(x+y)-\zeta(x^2+y^2)-2(\zeta^2+3\zeta+1)xy+\zeta(2\zeta+1)(x+y).$$
Then,  $T_n^{(0,0,0,0)}$ is a symmetric polynomial in all its variables, depending also as a polynomial on the parameter $\zeta$.
Up to a change of variables, $T_n^{(0,0,0,0)}$ coincides with the polynomial $H_{2n}$ of \cite{zj}. 
For $\mathbf k\in\mathbb Z_{\geq 0}^4$, $T_n^{(\mathbf k)}$ is  obtained from
$T_n^{(0,0,0,0)}$ by specializing $k_j$ of the variables to $\xi_j$, for $0\leq j\leq 3$. If  $k_j<0$ for some $j$, the definition is more complicated.

To explain the general definition of  $T_n^{(\mathbf k)}$, let
$$(\sigma f)(z)=f(z+1/3)-f(z-1/3)$$
and 
\begin{align}\label{a}a(x)&=
\big(x-(2\zeta+1) \big)(x-1)\big(({\zeta+2})x-\zeta\big)\big(({\zeta+2})x-{(2\zeta+1)\zeta}\big),\\
\nonumber &=(\zeta+2)^2\prod_{l=0}^3(x-\xi_l).
\end{align}
Then, $a(x(z))$ has a meromorphic square root that we will denote $\sqrt{a}$. 
(In  \cite{r1a}, $\sigma$ and $\sqrt{a}$ are denoted $\sqrt 3\sigma/i$ and $\ti\phi/\sqrt 3$.) 
We can then write
\begin{equation}\label{us}\sigma\Big|_{\Theta_n^{(0,0,0,0)}}=\frac{M_n}{\sqrt a}\,\hat \sigma_n \frac 1 {M_n},\end{equation}
where $\hat\sigma_n$ is an operator acting between appropriate spaces of polynomials in $x=x(z)$. It is determined by
\begin{multline}\label{hs}\hat\sigma_n\left((x-a)\prod_{j=1}^{n-1}(x-b_j)G(x,b_j)\right)=\prod_{j=1}^{n-1}(x-b_j)G(x,b_j)\\
\times\Big(x(x-2\zeta-1)\big((\zeta+2)x-3\zeta\big)-a\big((\zeta+2)x-\zeta\big)(2\zeta+1-3x)\Big),
\end{multline}
where $a$ and $b_j$ are arbitrary.

We may now define, for $0\leq k\leq 2n$,
\begin{multline}\label{ttv}T(x_1,\dots,x_k;x_{k+1},\dots,x_{2n})\\
=\frac{(\id^{\otimes k}\otimes\,\hat\sigma_n^{\otimes (2n-k)})\Delta(x_1,\dots,x_{2n})T_n^{(0,0,0,0)}(x_1,\dots,x_{2n})}{\Delta(x_1,\dots,x_k)\Delta(x_{k+1},\dots,x_{2n})}.
 \end{multline}
The identity \eqref{hs} can be applied termwise to \eqref{gdf} to give an explicit formula for \eqref{ttv} as a block determinant, see \cite[Eq.\ (2.39)]{r1a}.

The general definition of $T_n^{(\mathbf k)}$ can now be stated as
$$T_n^{(\mathbf k)}(x_1,\dots,x_{m})=\frac{(-1)^{\binom{|\mathbf k^-|}2}\,T(x_1,\dots,x_m,\boldsymbol\xi^{{\mathbf k}^+};\boldsymbol\xi^{{\mathbf k}^-})}{2^{|\mathbf k^-|}\prod_{i,j=0}^3G(\xi_i,\xi_j)^{k_i^-k_j^+}\prod_{j=1}^m\prod_{i=0}^3G(x_j,\xi_i)^{k_i^-}},$$
where $k_j^\pm=\max(\pm k_j,0)$ and 
$$\boldsymbol\xi^{{\mathbf k}}=(\underbrace{\xi_0,\dots,\xi_0}_{k_0},\dots,\underbrace{\xi_3,\dots,\xi_3}_{k_3}).$$
 Then,
\begin{equation}\label{ten}T_n^{(\mathbf k+\mathbf l)}(x_1,\dots,x_m)=T_n^{(\mathbf k)}(x_1,\dots,x_m,\boldsymbol\xi^{\mathbf l}),\qquad \mathbf l\in\mathbb Z_{\geq 0}^4.
\end{equation}

In \S \ref{cts}, we will also need  dual functions defined by
\begin{equation}\label{unk}U_n^{(\mathbf k)}(x_1,\dots,x_{m})=\frac{(-1)^{\binom{|\mathbf k^+|}2}\,T(\boldsymbol\xi^{{\mathbf k}^+};x_1,\dots,x_m,\boldsymbol\xi^{{\mathbf k}^-})}{2^{|\mathbf k^+|}\prod_{i,j=0}^3G(\xi_i,\xi_j)^{k_i^-k_j^+}\prod_{j=1}^m\prod_{i=0}^3G(x_j,\xi_i)^{k_i^+}}.\end{equation}
They satisfy 
\begin{equation}\label{uen}U_{n-|\mathbf l|}^{(\mathbf k-\mathbf l)}(x_1,\dots,x_m)=U_n^{(\mathbf k)}(x_1,\dots,x_m,\boldsymbol\xi^{\mathbf l}),\qquad \mathbf l\in\mathbb Z_{\geq 0}^4,
\end{equation}
which can be proved similarly as \eqref{ten}.
It follows easily from \eqref{ttv} that
\begin{multline}\label{stu}\hat\sigma_{n+|\mathbf k^-|}^{\otimes m}\prod_{j=1}^m\prod_{i=0}^3(x_j-\xi_i)^{k_i^+}G(x_j,\xi_i)^{k_i^-}
(\Delta T_n^{(\mathbf k)})(x_1,\dots,x_m)\\
=(-1)^{\binom{|\mathbf k|}2}2^{|\mathbf k|} 
\prod_{j=1}^m\prod_{i=0}^3(x_j-\xi_i)^{k_i^-}G(x_j,\xi_i)^{k_i^+}
(\Delta U_n^{(\mathbf k)})(x_1,\dots,x_m).
\end{multline}
By \cite[Prop.\ 2.20]{r1a}, we have up to an explicit 
factor independent of the variables $x_j$,
\begin{align*}U_n^{(k_0,k_1,k_2,k_3)}(x_1,\dots,x_m)&\sim\prod_{j=1}^m
x_j\left(x_j-\frac{2\zeta+1}{\zeta+2}\right)(x_j-\zeta)
\\
&\quad\times T_{m-2-n}^{(-k_0-1,-k_1-1,-k_2-1,-k_3-1)}(x_1,\dots,x_m); \end{align*}
however, we will not need this fact.

\section{Schr\"odinger  equation}

\subsection{Schr\"odinger equation with elliptic potential}

In this Section, we show that the elements in the one-dimensional space
 $(\Theta_n^{(\mathbf k)})^{\wedge m}$ 
 satisfy a Schr\"odinger equation with elliptic potential. We let $\wp=\wp(z|1/3,\tau)$ denote 
 Weierstrass's 
$\wp$-function as defined in \cite{ww}. It is an even elliptic function with periods $1/3$ and $\tau$, with no singularities except double poles at the lattice points, such that
\begin{equation}\label{wpr}\lim_{z\rightarrow 0} z^2\wp(z)=1. \end{equation}
These properties determine  $\wp$ uniquely up to an additive constant, whose value is irrelevant for our purposes.




\begin{theorem}\label{semt}
Let $\Psi(z_1,\dots,z_m,\tau)$ be a meromorphic function, which for fixed $\tau$ belongs to $(\Theta_n^{\mathbf k})^{\wedge m}$, and let
 \begin{multline*}\Phi=\prod_{j=1}^m\Big(\left
(e^{-3\pi \ti z_j}\tha(e^{6\pi\ti z_j};p^6)\right)^{k_0}
\tha(p^3e^{6\pi\ti z_j};p^6)^{k_1}\\
\times\tha(-p^3e^{6\pi\ti z_j};p^6)^{k_2}
\left(e^{-3\pi \ti z_j}\tha(-e^{6\pi\ti z_j};p^6)\right)^{k_3}\Big)
.
\end{multline*}
 Then, $\Phi^{-1}\Psi$ satisfies the Schr\"odinger equation
\begin{equation}\label{sem}\mathcal H\Phi^{-1}\Psi=C\Phi^{-1}\Psi, \end{equation}
where
\begin{equation}\label{om}\mathcal H= -12\pi \ti m \frac{\partial}{\partial\tau}+\sum_{j=1}^m \left(\frac{\partial^2}{\partial z_j^2}-V(z_j)\right),\end{equation}
 $C$ is  independent of the variables $z_j$ and 
$$V(z)=\sum_{j=0}^3k_j(k_j+1)\wp(z-\gamma_j). $$
\end{theorem}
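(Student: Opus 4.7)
The strategy is a gauge-conjugation plus dimension argument. Setting $\chi=\Phi^{-1}\Psi$, equation \eqref{sem} is equivalent to $\widetilde{\mathcal H}\Psi=C\Psi$, where $\widetilde{\mathcal H}:=\Phi\mathcal H\Phi^{-1}$. Since $(\Theta_n^{\mathbf k})^{\wedge m}$ is one-dimensional, it suffices to show that $\widetilde{\mathcal H}\Psi$ again lies in this space; the scalar $C$ is then automatically independent of $z_1,\dots,z_m$.

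To check $\widetilde{\mathcal H}\Psi\in(\Theta_n^{\mathbf k})^{\wedge m}$, I would verify each defining property of $\Theta_n^{\mathbf k}$ in each variable $z_j$, together with antisymmetry. Antisymmetry is immediate, since $\mathcal H$ is symmetric in the $z_j$ and $\Phi$ factors as a product of identical one-variable functions. The quasi-periodicity \eqref{fqp} reduces to a heat-equation-type compatibility: computing $\Phi(z_j+\tau)/\Phi(z_j)$ from the quasi-periodicity of $\theta(\cdot;p^6)$ gives $\chi(z_j+\tau)=(-1)^{k_0+k_1}e^{-3\pi im(\tau+2z_j)}\chi(z_j)$, and the operator $-12\pi im\,\partial_\tau+\partial_{z_j}^2-V(z_j)$ preserves precisely this quasi-periodicity (the coefficient $-12\pi im$ is uniquely determined by this consistency, mimicking the role of $12\pi\ti$ in \eqref{ops}). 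The three-term relation \eqref{fe} is preserved because $V$ has period $1/3$ (as $\wp(\cdot|1/3,\tau)$ does) and $\Phi$ satisfies a twisted one-third periodicity $\Phi(z+1/3)/\Phi(z)=(-1)^{k_0+k_3}$, so that the twisted three-term relation on $\chi$ converts back to \eqref{fe} for $\widetilde{\mathcal H}\Psi$ after multiplication by $\Phi$.

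The subtle step is the verification of the local conditions \eqref{oca} at each half-period $\gamma_\ell$. Near $\gamma_\ell$ one has $V(z_j)=k_\ell(k_\ell+1)/(z_j-\gamma_\ell)^2+O(1)$ and $\Phi\sim(z_j-\gamma_\ell)^{k_\ell}$; the coefficient $k_\ell(k_\ell+1)$ is the unique value for which the Schr\"odinger operator $\partial_{z_j}^2-V(z_j)$ has indicial exponents $k_\ell+1$ and $-k_\ell$, so that the singular $(z_j-\gamma_\ell)^{k_\ell-1}$ parts of $\chi_{z_jz_j}$ and $V(z_j)\chi$ cancel. A careful Laurent-expansion computation, incorporating also the $\partial_\tau$ contribution (nontrivial because $\gamma_\ell=\gamma_\ell(\tau)$ for $\ell=1,2$), shows that $\mathcal H\chi$ has sufficient vanishing order so that $\widetilde{\mathcal H}\Psi=\Phi\mathcal H\chi$ continues to satisfy \eqref{oca} in each variable.

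\textbf{Main obstacle.} The central difficulty is the local analysis at the half-periods: one must compute the Laurent expansions of $\chi$, $V$, and the gauge factor to sufficient order, and verify both parts of \eqref{oca}, including the second one, which constrains the poles of $\widetilde{\mathcal H}\Psi$ at the lattice points $\gamma_\ell\pm 1/3$ where $V$ has further double poles. Once this is controlled, the dimension argument concludes: $\widetilde{\mathcal H}\Psi\in(\Theta_n^{\mathbf k})^{\wedge m}$, so by one-dimensionality $\widetilde{\mathcal H}\Psi=C(\tau)\Psi$, and the theorem follows.
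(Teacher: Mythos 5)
Your proposal follows essentially the same route as the paper: reduce to showing $\Phi\mathcal H\Phi^{-1}\Psi\in(\Theta_n^{\mathbf k})^{\wedge m}$ via the one-dimensionality of that space, check the quasi-periodicities (your heat-equation compatibility is the paper's Lemma \ref{htl}), and verify the local conditions \eqref{oca} at the half-periods by the indicial-exponent cancellation coming from the coefficient $k_j(k_j+1)$. The step you flag as the main obstacle is exactly what the paper carries out, by writing $\wp(z)-\wp(1/6)=1/\phi(z)^2$ and factoring $\Phi^{-1}\Psi=\phi(z_1-\gamma_j)^{k_j+1}\cdot(\text{regular})$ so that the cancellation reduces to the single identity \eqref{phdd}, with the translated condition \eqref{oxb} handled by the observation that $\Phi\mathcal H\Phi^{-1}$ commutes with shifts by $1/3$.
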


Note that  $\Psi$ is only determined up to a factor depending on $\tau$; the
factor $C$  depends on this choice of normalization.
If we choose  $C=0$ and use that 
$\wp(z|1/3,\tau)=9\wp(3z|1,3\tau)$, we find that 
the case $m=1$ of \eqref{sem} reduces to \eqref{nse}, with
 $z=x/3$, $\tau=2\pi\ti t/3$.

For the proof of Theorem \ref{semt}, we first state
 the following elementary consequence of the chain rule.

\begin{lemma}\label{htl}
If $f(z,\tau)$ is a meromorphic function in two variables satisfying
$$f(z+\tau,\tau)=\varepsilon e^{-\lambda(\tau+2z)}f(z,\tau), $$
where $\varepsilon$ and $\lambda$ are arbitrary constants, 
then the same identity holds with $f$ replaced by
$$\frac{\partial^2 f}{\partial z^2}-4\lambda\frac{\partial f}{\partial \tau}.$$
\end{lemma}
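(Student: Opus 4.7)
The statement is a direct chain-rule computation, and the plan is to expose why the specific combination $\partial_z^2-4\lambda\partial_\tau$ is the one that preserves the quasi-periodicity. I would start by introducing the auxiliary function
\[
F(z,\tau):=f(z+\tau,\tau)=\varepsilon e^{-\lambda(\tau+2z)}f(z,\tau),
\]
so that the hypothesis is encoded as an equality between two expressions for $F$. Differentiating the first expression via the chain rule in $(z,\tau)\mapsto(z+\tau,\tau)$ gives the three identities
\[
f_w(z+\tau,\tau)=F_z,\qquad f_{ww}(z+\tau,\tau)=F_{zz},\qquad f_\tau(z+\tau,\tau)=F_\tau-F_z,
\]
where $w$ denotes the first slot of $f$. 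Therefore
\[
(f_{ww}-4\lambda f_\tau)(z+\tau,\tau)=F_{zz}-4\lambda F_\tau+4\lambda F_z.
\]

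Next I would expand the right-hand side using the second expression for $F$. Setting $E=\varepsilon e^{-\lambda(\tau+2z)}$ and noting $E_z=-2\lambda E$, $E_\tau=-\lambda E$, one gets
\[
F_z=E(f_z-2\lambda f),\qquad F_{zz}=E(f_{zz}-4\lambda f_z+4\lambda^2 f),\qquad F_\tau=E(f_\tau-\lambda f).
\]
Substituting and collecting, the coefficient of $f$ is $4\lambda^2+4\lambda^2-8\lambda^2=0$ and the coefficient of $f_z$ is $-4\lambda+4\lambda=0$, so everything except the pure second derivative and the $-4\lambda f_\tau$ term cancels, giving
\[
F_{zz}-4\lambda F_\tau+4\lambda F_z=E(f_{zz}-4\lambda f_\tau)=\varepsilon e^{-\lambda(\tau+2z)}(f_{zz}-4\lambda f_\tau)(z,\tau).
\]
Comparing with the previous display is exactly the desired quasi-periodicity for $f_{zz}-4\lambda f_\tau$.

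There is essentially no obstacle here beyond bookkeeping: one merely has to be careful to distinguish derivatives evaluated at $(z,\tau)$ from derivatives evaluated at $(z+\tau,\tau)$, and to keep track of which $\lambda$-terms arise from the cross derivative in the chain rule versus from differentiating the exponential prefactor. The conceptual content of the lemma is precisely that the numerical coefficient $-4\lambda$ is rigged so as to kill the two spurious terms simultaneously; this is the same mechanism underlying the heat equation \eqref{ops} for the theta function $\psi$, and it is what will later allow the Hamiltonian $\mathcal H$ in \eqref{om} to preserve the quasi-periodicity of elements of $\Theta_n^{\mathbf k}$.
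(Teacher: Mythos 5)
Your computation is correct and is precisely the chain-rule verification that the paper leaves to the reader (Lemma \ref{htl} is stated there without proof, as an ``elementary consequence of the chain rule''). All the cancellations check out: the coefficient of $f$ is $4\lambda^2+4\lambda^2-8\lambda^2=0$ and that of $f_z$ is $-4\lambda+4\lambda=0$, so the argument is complete.
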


Let us  express the potential in terms of the function
\begin{equation}\label{phi}\phi(z)=\frac \ti{3\pi}\frac{(-p^6;p^6)_\infty^2}{(p^6;p^6)_\infty^2}\frac{\tha(e^{6\pi\ti z};p^6)}{\tha(-e^{6\pi\ti z};p^6)}. 
\end{equation}
It is easy to see that \cite[\S 20.53, Example 1]{ww}
$$\wp(z)-\wp(1/6)=\frac 1{\phi(z)^2}. $$
Thus, up to a change of the constant $C$, 
we may as well prove that \eqref{sem} holds with the modified potential
\begin{equation}\label{pot}V(z)=\sum_{j=0}^3\frac{k_j(k_j+1)}{\phi(z-\gamma_j)^2}. \end{equation}
Note that \eqref{wpr}
translates to
\begin{equation}\label{phd}\phi'(0)^2=1 \end{equation}
(indeed, one may check directly from \eqref{phi} that $\phi'(0)=1$).

Since, by \cite[Thm.\ 2.4]{r1a}, $\dim(\Theta_n^{\mathbf k})^{\wedge m}=1$, it is enough to show that 
$\Xi=\Phi\mathcal H\Phi^{-1}\Psi\in(\Theta_n^{\mathbf k})^{\wedge m}$.
As a function of each $z_j$, $\Phi$ satisfies
\begin{align}\label{php}\Phi(z+1/3)&=(-1)^{k_0+k_3}\Phi(z),\\
\nonumber\Phi(-z)&=(-1)^{k_0}\Phi(z),\\
\nonumber\Phi(z+\tau)&=(-1)^{k_0+k_1}e^{-3\pi\ti(k_0+k_1+k_2+k_3)(\tau+2z)}\Phi(z).
\end{align}
Since  $\Psi$ satisfies  \eqref{vde},
it follows that
\begin{gather*}
\begin{split}
(\Phi^{-1}\Psi)(z+1)&=(-1)^{k_0+k_3}(\Phi^{-1}\Psi)(z),\\
(\Phi^{-1}\Psi)(-z)&=(-1)^{k_0+1}(\Phi^{-1}\Psi)(z),\\
(\Phi^{-1}\Psi)(z+\tau)&=(-1)^{k_0+k_1}e^{-3\pi \ti m(\tau+2z)}(\Phi^{-1}\Psi)(z),
\end{split}\\
\nonumber(\Phi^{-1}\Psi)(z)+(-1)^{k_0+k_3}\left((\Phi^{-1}\Psi)(z+1/3)+(\Phi^{-1}\Psi)(z-1/3)\right)=0.
\end{gather*}
We must show that these  relations are  preserved by $\mathcal H$.
Since $V$ is an even elliptic function with periods $1/3$ and $\tau$, 
this is clear except for the third relation, which is covered by Lemma \ref{htl}.
 Thus, $\Xi$ satisfies  \eqref{vde}
as a function of each $z_j$. It is also obviously antisymmetric. 

 It remains to show that
\begin{equation}\label{oxa}\lim_{z_1\rightarrow \gamma_j}\phi(z_1-\gamma_j)^{1-2k_j}\Xi(z_1)=0, \end{equation}
\begin{equation}\label{oxb}\lim_{z_1\rightarrow \gamma_j}\phi(z_1-\gamma_j)^{2}\left(\Xi\left(z_1+\frac 13\right)-\Xi\left(z_1-\frac13\right)\right)=0. \end{equation}

To prove \eqref{oxa}, we  write
\begin{equation}\label{ppf}\Phi^{-1}\Psi
=\phi(z_1-\gamma_j)^{k_j+1}\cdot\frac{\Phi^{-1}\Psi}{\phi(z_1-\gamma_j)^{k_j+1}},
\end{equation}
where, by \eqref{oca} for $f=\Psi(z_1)$, the second factor is regular at $z_1=\gamma_j$.
Acting with $\mathcal H$, only the term $\partial^2/\partial z_1^2-V(z_1)$  contributes to \eqref{oxa}.
Moreover,  both derivatives must hit the first factor  in \eqref{ppf}, which is then reduced to 
$$k_j(k_j+1)\phi'(z_1-\gamma_j)^2\phi(z_1-\gamma_j)^{k_j-1}.$$
 Thus,  \eqref{oxa} follows from 
\begin{equation}\label{phdd}\lim_{z\rightarrow \gamma_j}\left(k_j(k_j+1)\phi'(z-\gamma_j)^2-V(z)\phi(z-\gamma_j)^2\right)=0,
 \end{equation}
which is true in view of \eqref{phd}.

The proof of \eqref{oxb} is similar. We start from the factorization
$$\frac{\Psi(z_1+1/3)-\Psi(z_1-1/3)}{\Phi(z)}
=\phi(z_1-\gamma_j)^{-k_j}\cdot \frac{\Psi(z_1+1/3)-\Psi(z_1-1/3)}{\Phi(z)\phi(z_1-\gamma_j)^{-k_j}}.$$
In view of \eqref{php} and the fact that $V$ is $1/3$-periodic, the operator
$\Phi\mathcal H\Phi^{-1}$ commutes with translations by $1/3$.
Using this fact,  \eqref{oxb} can be reduced to \eqref{phdd}.

\subsection{Uniformized Schr\"odinger equation}\label{pds}

The following result is a uniformized version of
Theorem \ref{semt}.
The special case $m=1$, $\mathbf k=(0,n,n,-1)$,  is equivalent to  \cite[Eq.\ (27)]{bm1} (given there without a complete proof, since 
it was not known at the time that $\dim\Theta_n^{(0,n,n,-1)}=1$).
The case $m=1$, $\mathbf k=(n,n,0,-1)$ was conjectured in \cite{bm4};
it is in fact equivalent to the case $\mathbf k=(0,n,n,-1)$
by the symmetries \cite[Cor.\ 2.19]{r1a}. 
Moreover, the case $m=2n$,  $\mathbf k=(0,0,0,0)$ is equivalent to \cite[Eq.\ (50)]{zj}.

\begin{theorem}\label{pdet}
The function $T_n^{(\mathbf k)}$ satisfies the differential equation
\begin{multline*}\left(\sum_{j=1}^m \left(a(x_j)\frac{\partial^2}{\partial x_j^2}+b(x_j)\frac{\partial}{\partial x_j}+c(x_j)\right)+m\, d\frac{\partial}{\partial\zeta}\right)\\
 f\prod_{j=1}^mF(x_j)\Delta(x_1,\dots,x_m) T_n^{(\mathbf k)}(x_1,\dots,x_m)=0,\end{multline*}
where 
$$F(x)=\prod_{j=0}^3\left(\frac{x-\xi_j}{G(x,\xi_j)}\right)^{\frac{k_j}2}, $$
\begin{align*} f&=\frac{(\zeta+1)^{\frac 14k_2(k_2+2)+\frac 14k_3(k_3+2)-(k_0+k_1)(k_2+k_3-1)+\frac 12k_2k_3}}{\zeta^{\frac 14k_1(k_1-3)+\frac 14k_2(k_2-3)+(k_0+k_3)(k_1+k_2-\frac 14)-\frac 12 k_0k_3}}\\
&\quad\times\frac {(\zeta+2)^{-\frac 34k_0(k_0+1)-\frac 14k_1(2k_1+5)-\frac 14k_2(2k_2+5)-\frac 34k_3(k_3+1)+\frac 12(k_0+k_3)(k_1+k_2)+\frac 12k_1k_2}}{(\zeta-1)^{\frac 14(k_2+k_3)(k_2+k_3-2)}(2\zeta+1)^{\frac 14(k_0+k_2)(k_0+k_2-2)} },
\end{align*}
$a(x)$ is given by \eqref{a},
$b(x)$ is a polynomial in $(x,\zeta)$ of bidegree $(3,3)$, which 
we give in terms of  the partial fraction decomposition
 \begin{align}\notag
 \frac{b(x)}{a(x)} &=
\frac{3(\zeta+1)+m(\zeta-1)(\zeta+2)}{2(\zeta+1)(x-2\zeta-1)}
+\frac{(\zeta+2)\big(3\zeta(\zeta+1)-m(2\zeta+1)(\zeta-1)\big)}{2\zeta(\zeta+1)\big((\zeta+2)x-\zeta\big)}
\\
 \label{b}& \quad+\frac{(\zeta+2)\big(3\zeta-m(\zeta^2+4\zeta+1)\big)}{2\zeta\big((\zeta+2)x-\zeta(2\zeta+1)\big)}+\frac{3}{2(x-1)}, \end{align}
$c(x)=c_0(x)+W(x)$, with
\allowdisplaybreaks{
\begin{align}
\notag c_0(x)&=\frac {3(m-2)(3m-4)}4(\zeta+2)^2x^2\\
\notag&\quad-\frac {3m-4}2(\zeta+2)\big(2(2m-3)(\zeta^2+1)+(7m-12)\zeta\big)x\\
\notag&\quad-\frac {2(2m^2-5)}3\,\zeta^4-\frac {7m^2+66m-112}6\,\zeta^3+\frac {7(m-2)(7m-8)}4\,\zeta^2\\
\label{cn}&\quad+\frac{(5m-8)(19m-14)}6\,\zeta+\frac {11m^2-24m+10}3,\\
\notag W(x)&=-k_0(k_0+1)(2\zeta+1)^3
\frac{(x-1)(x-\zeta)^2}{x^2(x-(2\zeta+1))}\\
\notag&\quad-k_1(k_1+1)\frac{\big((\zeta+2)x-{\zeta(2\zeta+1)}\big)\big((\zeta+2)x-({2\zeta+1})\big)^2}{(\zeta+2)x-{\zeta}}\\
\notag&\quad+k_2(k_2+1)\frac{(\zeta+1)(\zeta-1)^3(2\zeta+1)^3\big((\zeta+2)x-{\zeta}\big)}{\big((\zeta+2)x-{\zeta(2\zeta+1)}\big)\big((\zeta+2)x-({2\zeta+1})\big)^2}\\
\label{wp}&\quad+k_3(k_3+1)(\zeta+1)(\zeta-1)^3\frac{x^2(x-(2\zeta+1))}{(x-1)(x-\zeta)^2} \end{align}
}
and
$$ d=2\zeta(\zeta-1)(\zeta+1)(\zeta+2)(2\zeta+1).$$
\end{theorem}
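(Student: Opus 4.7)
The plan is to deduce Theorem~\ref{pdet} from Theorem~\ref{semt} by taking $\Psi$ to be the spanning element \eqref{ukl} of $(\Theta_n^{\mathbf k})^{\wedge m}$ and converting \eqref{sem} into its algebraic form in the uniformizing coordinates $(x_1,\dots,x_m,\zeta)$. Writing $\Phi^{-1}\Psi = P\cdot f(\zeta)\prod_{j=1}^m F(x_j)\Delta T_n^{(\mathbf k)}$ with an explicit $z_j$- and $\tau$-dependent prefactor $P$, we conjugate $\mathcal H$ by $Pf$, change variables to $(\mathbf x,\zeta)$, and check that the resulting operator coincides with the operator of Theorem~\ref{pdet} modulo a scalar that can be absorbed into $C$ (or, after choice of normalization, into the constant part of $c_0$).

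The change of variables $z\mapsto x(z)$ rests on two explicit identities. First, combining \eqref{xp} with \eqref{zp} yields $x'(z)^2=\kappa(\tau)\,a(x(z))$ for an explicit $\tau$-dependent scalar $\kappa$; this produces the leading coefficient $a(x_j)$ in front of $\partial^2/\partial x_j^2$. Second, the potential $V(z)$ of \eqref{pot} is an even elliptic function with periods $1/3$ and $\tau$ whose poles lie only at $z\equiv\gamma_l$, with principal parts fixed by $k_l(k_l+1)$; hence it equals $\kappa(\tau)\,W(x(z))$ plus a $z$-independent term, identifying the $W(x)$ part of $c(x)$. The $\tau$-derivative transforms by the chain rule as
\[
\frac{\partial}{\partial\tau}\bigg|_{\mathbf z}\;=\;\frac{d\zeta}{d\tau}\frac{\partial}{\partial\zeta}\bigg|_{\mathbf x}+\sum_{j=1}^m\frac{\partial x_j}{\partial\tau}\bigg|_{z_j}\frac{\partial}{\partial x_j}\bigg|_\zeta,
\]
where $d\zeta/d\tau$ can be computed directly from \eqref{zp} as a scalar multiple of the polynomial $d(\zeta)=2\zeta(\zeta-1)(\zeta+1)(\zeta+2)(2\zeta+1)$, matching the stated coefficient of $\partial/\partial\zeta$.

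The first-order coefficient $b(x)$ receives contributions from $x''(z)/x'(z)^2$, from the logarithmic $z$-derivative of the single-variable prefactor relating $\Phi(z)^{-1}M_n(z)\prod_l(x-\xi_l)^{k_l}$ to $F(x)$, and from the mixed term $(\partial x_j/\partial\tau)\,\partial_{x_j}$ produced by the chain rule. Since each source has only simple poles at $x=\xi_l$, with residues dictated by the local behaviour of $M_n$, $\Phi$ and $(x-\xi_l)^{k_l}$ near $z=\gamma_l$, the identification of $b(x)/a(x)$ reduces to a short residue check that yields the partial-fraction form \eqref{b}.

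The main obstacle is the simultaneous identification of $f(\zeta)$ and of the polynomial $c_0(x)$. After conjugation and change of variables, the zeroth-order term of the operator is a sum of $W(x)$ from $V(z)$, the scalar $C$, the quantities $(\partial_{z_j}^2 P)/P$ for the single-variable prefactor, the $\tau$-derivative $(\partial_\tau P)/P$ of the same prefactor, and the conjugation term $-md\cdot f'(\zeta)/f(\zeta)$. Requiring that all poles at $x=\xi_l$ cancel---so that the total is a polynomial in $x$---pins down $f'/f$ uniquely, and integration then gives the explicit formula for $f(\zeta)$; the remaining polynomial pieces assemble into \eqref{cn}. This last step reduces, via \eqref{xv}, \eqref{zp} and the heat equation \eqref{ops} (which controls the $\tau$-derivative of $M_n$), to finitely many polynomial identities in $\zeta$ that are substantial but purely mechanical to verify.
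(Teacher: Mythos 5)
Your overall strategy for the ``shape'' of the equation is the same as the paper's: take $\Psi$ to be the spanning element of $(\Theta_n^{\mathbf k})^{\wedge m}$, conjugate $\mathcal H$ by the prefactor, and change variables to $(\mathbf x,\zeta)$ using $x'(z)^2=\kappa(\tau)a(x)$, a Liouville-type identification of the potential with $W(x)$, and the computation of $\dot\zeta$. That part of your outline is sound and matches Lemmas \ref{ddl}--\ref{pfl} and the derivation of \eqref{dnc}.

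The genuine gap is in your final paragraph, where you claim that requiring the poles at $x=\xi_l$ to cancel ``pins down $f'/f$ uniquely.'' It cannot. The conjugation by $f(\zeta)$ contributes the term $m\,d\,f'/f$ to the zeroth-order coefficient, and this term is \emph{independent of $x$}; likewise the unknown scalar $C$ from Theorem \ref{semt} (which is genuinely undetermined, since $\Psi$ is only fixed up to a $\tau$-dependent factor) enters as an $x$-independent additive constant. No condition on the $x$-pole structure of the zeroth-order coefficient can see an additive constant, so pole cancellation determines $c_0(x)$ only up to an $x$-independent function of $\zeta$, and determines $f$ not at all. This is exactly the point where the paper has to work hardest: after the change of variables one arrives at \eqref{usm} with an unknown $\zeta$-function $e$ in the operator, and the entire subsection \S\ref{cts} is devoted to proving $e=m\,d\,f'/f$. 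The paper does this by (i) establishing the base case $\mathbf k=(0,0,0,0)$ from the explicit Taylor coefficients of $T_n^{(0,0,0,0)}$ at the origin together with a constant-term extraction lemma for $\sum_j f(x_j)\partial_{x_j}^k$ acting on $\Delta\cdot P$ (Lemmas \ref{sdl}, \ref{btl}, \ref{zjl}); and (ii) two induction steps in $\sum_j|k_j|$ obtained by letting all variables tend to $\xi_l$ and exploiting the recursions \eqref{ten} and \eqref{uen} --- the second of which requires introducing the dual polynomials $U_n^{(\mathbf k)}$ and proving that they satisfy the same differential equation (Proposition \ref{sup}), via the intertwining relations \eqref{us} and \eqref{stu}. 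None of this machinery, or any substitute for it, appears in your proposal, so as written the proof establishes the theorem only up to an undetermined $\zeta$-dependent additive constant in the operator, i.e.\ it does not establish the stated formula for $f$ (equivalently, the constant part of $c_0$).
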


The factors $f$ and $F(x_j)$ have been introduced in order to simplify the 
expressions for the coefficients $b$ and $c$. 
For later use, we note that 
commuting them
across the differential operator leads to 
\begin{equation}\label{usf}\left(\sum_{j=1}^m \left(a(x_j)\frac{\partial^2}{\partial x_j^2}+b_F(x_j)\frac{\partial}{\partial x_j}+c_F(x_j)\right)+m\, d\frac{\partial}{\partial\zeta}+e\right)
\Delta T_n^{(\mathbf k)}=0,\end{equation}
where
\begin{align}\notag b_F&=2a\frac{\partial F/\partial x}{F}+b, \\
\notag c_F&=a\frac{\partial^2 F/\partial x^2}{F}+b\frac{\partial F/\partial x}{F}+c
+md\frac{\partial F/\partial \zeta}{F},\\
\label{ee}e&=md\frac{f'}{f}.\end{align}

To prove  Theorem \ref{pdet}, we first note that, up to a  factor independent of the variables $z_j$,  
 \eqref{ukl} is proportional to
\begin{equation}\label{eps}\Phi\prod_{j=1}^m\psi(z_j)^mE(x_j)F(x_j)\,\Delta(x_1,\dots,x_m)T_n^{(\mathbf k)}(x_1,\dots,x_m),\end{equation}
where
$$E(x)
=(x-\xi_0)^{\frac{1-m}2}\left(x-\xi_1\right)^{\frac {1-m}2}\left(x-\xi_2\right)^{\frac{1-m}2}(x-\xi_3)^{\frac 12}. $$
This can be seen either from the fact that the quotient of \eqref{ukl} and \eqref{eps} 
is periodic without zeroes or poles, or  using \eqref{xd}.
We  choose the function $\Psi$ in Theorem \ref{semt} as \eqref{eps}.

We first study the action of the Schr\"odinger operator $\mathcal H$ on  $\psi(z_j)^m$. We have
\begin{multline*}\prod_{j=1}^m\psi(z_j)^{-m}\mathcal H\prod_{j=1}^m\psi(z_j)^m=\sum_{j=1}^m \left(m\frac{\psi''(z_j)}{\psi(z_j)}+m(m-1)\frac{\psi'(z_j)^2}{\psi(z_j)^2}-12\pi \ti m^2\frac{\dot\psi(z_j)}{\psi(z_j)}\right)\\
=\sum_{j=1}^mm(m-1)\left(\frac{\psi'(z_j)^2}{\psi(z_j)^2}-\frac{\psi''(z_j)}{\psi(z_j)}\right)=-\sum_{j=1}^mm(m-1)(\log\psi(z_j))'', \end{multline*}
where  $\psi'=\partial \psi/\partial z$, $\dot\psi=\partial\psi/\partial\tau$ and we used \eqref{ops} in the second step.
It follows that, for any function $X=X(z_1,\dots,z_m,\tau)$,
\begin{multline*}\prod_{j=1}^m\psi(z_j)^{-m}\mathcal H\prod_{j=1}^m\psi(z_j)^mX
=
-12\pi \ti m \dot X\\
+\sum_{j=1}^m\left(\frac{\partial^2 X}{\partial z_j^2}
+2m(\log\psi(z_j))'\frac{\partial X}{\partial z_j}-\big(m(m-1)(\log\psi(z_j))''+V(z_j)\big)X\right).
 \end{multline*}
Thus, if  $X$ can be expressed in terms of the variables $x_j=x(z_j,\tau)$ and $\zeta=\zeta(\tau)$, 
\begin{multline}\label{pu}\prod_{j=1}^m\psi(z_j)^{-m}\mathcal H\prod_{j=1}^m\psi(z_j)^mX\\
=
-12\pi \ti m \dot\zeta\frac{\partial X}{\partial\zeta}
+\sum_{j=1}^m\bigg((x_j')^2\frac{\partial^2 X}{\partial x_j^2}
+\Big(x_j''+2m(\log\psi(z_j))'x_j'-12\pi \ti m\dot x_j\Big)\frac{\partial X}{\partial x_j}\\
-\Big(m(m-1)(\log\psi(z_j))''+V(z_j)\Big)X\bigg).
 \end{multline}

We must express the coefficients in \eqref{pu} in terms of the variables $x_j$ and $\zeta$. We formulate the relevant elliptic function identities as a series of lemmas.
It is convenient to introduce the parameter
$$\chi=\chi(\tau)=4\pi^2p(p^2;p^2)_\infty^4
\theta(-1;p^2)\theta(-\om;p^2)^3. $$

\begin{lemma}\label{ddl}
We have
\begin{equation}\label{xds} (x')^2=-\frac{\chi}{2\zeta(\zeta+1)(\zeta+2)}\,a,\end{equation}
\begin{equation}\label{xsd} x''=-\frac{\chi}{4\zeta(\zeta+1)(\zeta+2)}\frac{\partial a}{\partial x},\end{equation}
where $a$ is as in \eqref{a}.
\end{lemma}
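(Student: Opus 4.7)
The plan is first to observe that \eqref{xsd} follows from \eqref{xds} by differentiating in $z$, and then to prove \eqref{xds} in two steps: a divisor argument showing $(x')^2/a(x(z))$ is $z$-independent, followed by an explicit identification of the resulting $\tau$-dependent constant.

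For the reduction, write $K(\tau)=-\chi/(2\zeta(\zeta+1)(\zeta+2))$ for the coefficient in \eqref{xds}. Differentiating both sides of \eqref{xds} in $z$ gives $2x'x''=K(\tau)\,(\partial a/\partial x)(x)\cdot x'$ by the chain rule, and dividing by $2x'$ yields $x''=(K(\tau)/2)\,(\partial a/\partial x)(x)$, which is exactly \eqref{xsd}. So it suffices to prove \eqref{xds}.

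To show that $(x')^2/a(x(z))$ is constant in $z$, recall that $x$ is an even elliptic function of periods $(1,\tau)$ and degree $2$, with two simple poles per fundamental domain (visible in the denominator of \eqref{xp}). Since the theta factor $\tha(e^{4\pi\ti z};p^2)$ in the numerator of \eqref{xp} has simple zeros at the four half-periods $\gamma_j$, we have $x'(\gamma_j)=0$; because $x$ is of degree $2$ and $-\gamma_j\equiv\gamma_j$ mod the period lattice, this forces $x-\xi_j$ to have a double zero at $\gamma_j$. Both $(x')^2$ and $a(x(z))=(\zeta+2)^2\prod_l(x-\xi_l)$ then have the same divisor --- four double zeros at the $\gamma_j$ and two fourth-order poles at the poles of $x$, total order $8$ on each side --- so their ratio is a constant $K(\tau)$ in $z$. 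This is the analogue of the classical identity $(\wp')^2=4(\wp-e_1)(\wp-e_2)(\wp-e_3)$.

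To identify $K(\tau)$, I would apply \eqref{xd} with $w=\gamma_l$ for each $l=0,1,2,3$, multiply to obtain $\prod_l(x(z)-\xi_l)$, and compare with the square of \eqref{xp}. Using $e^{2\pi\ti\gamma_l}\in\{1,p,-p,-1\}$ and the quasi-periodicity $\tha(p^2u;p^2)=-u^{-1}\tha(u;p^2)$, one brings the theta factors into a common form; by the constancy already established, the $z$-dependent parts cancel, leaving a $\tau$-only expression to be reduced to $-\chi/(2\zeta(\zeta+1)(\zeta+2))$ via \eqref{zp} and the definition of $\chi$. A shorter variant takes the limit $z\to 0$: since $x'(0)=0$, the ratio equals $2x''(0)/\bigl((\zeta+2)^2\prod_{l\neq 0}(\xi_0-\xi_l)\bigr)$, with $x''(0)$ extracted from \eqref{xp} (the numerator contributes a factor $-4\pi\ti z\,(p^2;p^2)_\infty^2$ as $z\to 0$) and $\prod_{l\neq 0}(\xi_0-\xi_l)=8\zeta(\zeta+1)^2(2\zeta+1)/(\zeta+2)^2$ computed from \eqref{xv}. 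The main obstacle is the bookkeeping of theta identities at this final step; the form of $\chi$ in the paper is engineered precisely so that the match works out cleanly.
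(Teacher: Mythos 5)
Your proposal is correct, but it is organized differently from the paper's proof, which is a single direct computation: the paper squares \eqref{xp}, factors $e^{-4\pi\ti z}\tha(e^{4\pi\ti z};p^2)^2$ into the eight theta functions $\tha(\pm e^{\pm 2\pi\ti z},\pm pe^{\pm 2\pi\ti z};p^2)$, recognizes this quotient as $a(x(z))$ times a $\tau$-dependent constant by applying \eqref{xd} at the four half-period values $e^{2\pi\ti\gamma_l}\in\{1,p,-p,-1\}$, and then reduces the constant via \eqref{zp}; no Liouville argument appears, and \eqref{xsd} is obtained by differentiation exactly as you do. Your divisor argument is sound (the zeros and poles match as you say, and $x-\xi_j$ does vanish doubly at $\gamma_j$ since $x$ is even of order two), and it buys you the freedom to pin down $K(\tau)$ by a purely local computation at $z=0$, which replaces the global matching of theta products by the single evaluation $K=2x''(0)/\bigl((\zeta+2)^2\prod_{l\neq 0}(\xi_0-\xi_l)\bigr)$ with $\prod_{l\neq 0}(\xi_0-\xi_l)=8\zeta(\zeta+1)^2(2\zeta+1)/(\zeta+2)^2$ — both of which check out. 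The trade-off is that the paper's route, while heavier on theta bookkeeping, produces the identity in one pass and never needs to argue about the divisor of $x'$; your first variant for identifying the constant is in fact the paper's computation in disguise, so if you carry out the shorter $z\to 0$ variant you genuinely avoid the eight-fold theta product. Either way the remaining work is the routine reduction of a $\tau$-only theta expression to $-\chi/(2\zeta(\zeta+1)(\zeta+2))$ via \eqref{zp}, which you have correctly isolated as the only unexecuted step.
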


\begin{proof}
Using \eqref{xd} and \eqref{xp}, we can write
\begin{equation*}\begin{split} x'(z)^2&= \frac{4\pi^2 \om^2(p^2;p^2)_\infty^4\tha( p, p\om;p^2)^2\tha(- p\om
;p^2)^4}{\tha(-\om;p^2)^4}\\
&\quad\times\frac{\theta(e^{\pm 2\pi \ti z},-e^{\pm 2\pi \ti z},pe^{\pm 2\pi \ti z},-pe^{\pm 2\pi \ti z};p^2)}{\tha( p\om e^{\pm 2\pi \ti z};p^2)^4}\\
&=-\frac{4\pi^2p^2\om^2(p^2;p^2)_\infty^4\tha(-\om;p^2)^6\tha(\om;p^2)^2}{\tha(p;p^2)^2\tha(-p\om;p^2)^2(\zeta+2)^2}\,a( x).
\end{split}\end{equation*}
By \eqref{zp}, this can be written in the form  \eqref{xds}, which is then differentiated to yield \eqref{xsd}.
\end{proof}

It follows from  Lemma \ref{ddl} that
\begin{equation}\label{sd}\frac{\partial ^2}{\partial z^2}=-\frac{\chi}{4\zeta(\zeta+1)(\zeta+2)}\left(2a\frac{\partial ^2}{\partial x^2}+\frac{\partial a}{\partial x}\frac{\partial}{\partial x}\right). \end{equation}

\begin{lemma}
The function $\psi$ satisfies
\begin{align}
\label{ld}\frac{\psi'(1/3)}{\psi(1/3)}&=-2\pi \ti\frac{(p^2;p^2)_\infty^2\tha(-\om;p^2)}{\tha(-1,\om;p^2)}, \\
\label{ld2}\frac{\psi'(1/3+\tau/2)}{\psi(1/3+\tau/2)}+3\pi \ti&=-2\pi \ti\om^2\frac{(p^2;p^2)_\infty^2\tha(-p\om;p^2)}{\tha(-p,\om;p^2)}. \end{align}
\end{lemma}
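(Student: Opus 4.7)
The plan is to prove each identity by reducing it, via the trisection identity \eqref{pss} and the quasi-periodicity \eqref{pqp}, to a direct computation of the relevant values of $\psi$ and $\psi'$ from the product \eqref{pse}.

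For \eqref{ld}: since $\psi(-z)=-\psi(z)$ one has $\psi'(-z)=\psi'(z)$, so differentiating \eqref{pss} at $z=0$ gives $\psi'(0)=2\psi'(1/3)$. Thus it suffices to compute $\psi'(0)/(2\psi(1/3))$. The numerator is read off from \eqref{pse} by expanding the factor $\tha(e^{2\pi\ti z};p^2)$ near its simple zero at $z=0$; the remaining factors evaluated at $z=0$ give $p^{1/12}(p^2;p^2)_\infty\tha(p,-p;p^2)$. The denominator $\psi(1/3)$ is obtained by substituting $z=1/3$ into \eqref{pse} and simplifying $\tha(\om,p\om,-p\om;p^2)$ using the factorization $\tha(x;p^2)\tha(px;p^2)=\tha(x;p)$, the identity $\tha(x;p)\tha(-x;p)=\tha(x^2;p^2)$, and the reflection $\tha(x^{-1};p)=-x^{-1}\tha(x;p)$, which collapses it to $-\om^2\tha(\om;p^2)/\tha(-\om;p^2)$. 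Since $-\om^2 e^{-\pi\ti/3}=1$, one finds $\psi(1/3)=p^{1/12}(p^2;p^2)_\infty\tha(\om;p^2)/\tha(-\om;p^2)$. Taking the quotient and invoking the elementary simplification $\tha(p,-p;p^2)\tha(-1;p^2)=2$ (equivalently $(p^2;p^4)_\infty(-p^2;p^2)_\infty=1$) yields \eqref{ld}.

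For \eqref{ld2}: differentiating \eqref{pqp} in the form $\psi(z-\tau)=e^{3\pi\ti(2z-\tau)}\psi(z)$ and evaluating at $z=1/3+\tau/2$ (where the exponential reduces to $1$) gives $\psi'(1/3-\tau/2)=6\pi\ti\,\psi(1/3+\tau/2)+\psi'(1/3+\tau/2)$. Combining this with the derivative of \eqref{pss} at $z=\tau/2$, and using once more $\psi'(-z)=\psi'(z)$, yields
\[
\frac{\psi'(1/3+\tau/2)}{\psi(1/3+\tau/2)}+3\pi\ti=\frac{\psi'(\tau/2)}{2\psi(1/3+\tau/2)}.
\]
Now $\psi'(\tau/2)$ is computed by expanding \eqref{pse} near the simple zero supplied by $\tha(pe^{2\pi\ti z};p^2)$ at $z=\tau/2$, and $\psi(1/3+\tau/2)$ follows from direct substitution into \eqref{pse}, using $\tha(p^2x;p^2)=-x^{-1}\tha(x;p^2)$ to reduce the arguments $\om p^2$ and $-\om p^2$ that appear. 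The final simplification again invokes $\tha(p,-p;p^2)\tha(-1;p^2)=2$, together with $-\om\,e^{-\pi\ti/3}=\om^2$.

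The main obstacle is not conceptual but bookkeeping: keeping track of the phase factors and the $q$-Pochhammer products that accumulate when computing $\psi(1/3)$, $\psi(1/3+\tau/2)$ and the leading coefficients of $\psi$ at its simple zeros $0$ and $\tau/2$. No new idea is required beyond the reduction, via \eqref{pss} and \eqref{pqp}, of the evaluation of $\psi'/\psi$ at the points $1/3$ and $1/3+\tau/2$ to the evaluation at the nearby zeros $0$ and $\tau/2$ of $\psi$.
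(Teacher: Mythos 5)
Your proposal is correct and follows essentially the same route as the paper: both arguments reduce $\psi'(1/3)$ to $\psi'(0)/2$ and $\psi'(1/3+\tau/2)/\psi(1/3+\tau/2)+3\pi\ti$ to $\psi'(\tau/2)/2\psi(1/3+\tau/2)$ using the trisection identity, the quasi-periodicity and the parity of $\psi$, and then evaluate the required values and leading coefficients of $\psi$ at $0$, $1/3$, $\tau/2$ and $1/3+\tau/2$ directly from the product formula. The only cosmetic difference is that you differentiate the quasi-periodicity relation and invoke the evenness of $\psi'$ as two separate steps, whereas the paper first combines them into a single functional identity before differentiating; the intermediate identity obtained is the same.
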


\begin{proof}
Since  $\psi$ is odd, differentiating \eqref{pss}
gives $\psi'(1/3)=\psi'(0)/2$. On the other hand, by
\eqref{pse} we can write
$$\frac{\psi(z)}{1-e^{4\pi \ti z}}=\frac{p^{\frac 1{12}}e^{-\pi \ti z}(p^2,p^2e^{\pm 4\pi \ti z};p^2)_\infty}{\tha(-e^{2\pi \ti z};p^2)_\infty}, $$
which for $z\rightarrow 0$ reduces to
\begin{equation}\label{pdn}\psi'(0)=-4\pi \ti \frac{p^{\frac 1{12}}(p^2;p^2)_\infty^3}{\tha(-1;p^2)}. \end{equation}
After simplification, this yields \eqref{ld}.

Similarly, it follows from \eqref{pqp} and \eqref{pss} that
$$\psi(z)=\psi\left(z+\frac 13\right)-e^{3\pi \ti(\tau-2z)}\psi\left(-z+\tau+\frac 13\right). $$
Differentiating this identity and letting $z=\tau/2$ gives
\begin{equation}\label{psd}\frac{\psi'(1/3+\tau/2)}{\psi(1/3+\tau/2)}+3\pi \ti=\frac {\psi'(\tau/2)}{2\psi(1/3+\tau/2)}.\end{equation}
We now let $z\rightarrow\tau/2$ in the identity
$$\frac{\psi(z)}{1-p^2e^{4\pi \ti z}}=\frac{p^{\frac 1{12}}e^{-\pi \ti z}(p^2,e^{4\pi \ti z},p^4e^{-4\pi \ti z};p^2)_\infty}{\tha(-e^{2\pi \ti z};p^2)_\infty}, $$
to obtain
$$\psi'\left(\frac\tau 2\right)=4\pi \ti\frac{p^{-\frac 5{12}}(p^2;p^2)_\infty^3}{\tha(-p;p^2)}. $$
Combining this with \eqref{psd}  gives  \eqref{ld2}.
\end{proof}

\begin{lemma}\label{xdl}In the notation above,
\begin{equation}\label{fd}2\frac{\psi'}{\psi}\, x'-12\pi \ti\dot  x=-\frac{\chi}{\zeta+2}\, B, \end{equation}
where
$$B(x,\zeta)=( x-1)\big((\zeta+2) x+2\zeta+1\big).$$
\end{lemma}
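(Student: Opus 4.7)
The plan is to show that the left-hand side $F(z,\tau):=2(\psi'/\psi)\,x'-12\pi\ti\,\dot x$, viewed as a function of $z$ for fixed $\tau$, is an even elliptic function with periods $1$ and $\tau$ whose only singularities are double poles at the simple poles of $x$; such a function must be a polynomial of degree at most two in $x$ with $\tau$-dependent coefficients, which can then be pinned down by three explicit evaluations.

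First I would check the periodicity of $F$. The $1$-periodicity is immediate, while the $\tau$-periodicity uses $\psi(z+\tau)=-e^{-3\pi\ti(\tau+2z)}\psi(z)$, which gives $(\psi'/\psi)(z+\tau)=(\psi'/\psi)(z)-6\pi\ti$, together with the consequence $\dot x(z+\tau,\tau)=\dot x(z,\tau)-x'(z,\tau)$ of differentiating $x(z+\tau,\tau)=x(z,\tau)$ in $\tau$; the two extra terms cancel in $F$. Evenness follows because $\psi'/\psi$ and $x'$ are odd while $\dot x$ is even. Next, the only candidate singularities are the zeros of $\psi$ at the half-periods $\gamma_0,\gamma_1,\gamma_2$ (recall $\psi(\gamma_3)\neq 0$) and the simple poles of $x$ at $z_\pm=\pm(\gamma_1+1/3)$. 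At each of $\gamma_0,\gamma_1,\gamma_2$, the simple pole of $\psi'/\psi$ is cancelled by the simple zero of $x'$ (since $x$ is even about each half-period); as $\dot x$ is also regular there, $F$ is regular. Hence $F=A(\tau)x^2+B(\tau)x+C(\tau)$.

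Three evaluations then determine $A,B,C$. At $z=\gamma_3=1/2$: $\psi$ is nonzero, $x'(1/2)=0$, and $\dot x(1/2)=\dot\xi_3=0$ since $\xi_3=1$ is $\tau$-independent, so $F(1/2)=0$, giving $A+B+C=0$ and showing that $(x-1)\mid F$. At $z=1/3$: the identity $\eta_0=x(1/3,\tau)=0$ forces $\dot x(1/3)=0$, so $F(1/3)=2(\psi'/\psi)(1/3)\,x'(1/3)=C$; here $(\psi'/\psi)(1/3)$ is supplied by \eqref{ld} and $x'(1/3)$ (up to sign) by \eqref{xds} evaluated at $x=0$, which reads $x'(1/3)^2=-\chi\zeta(2\zeta+1)^2/[2(\zeta+1)(\zeta+2)]$. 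Finally, the leading coefficient $A$ is extracted from the behaviour at $z_+$: from $x\sim r/(z-z_+)$ with $r=\operatorname{Res}(x,z_+)$ one gets $x'\sim-r/(z-z_+)^2$, and since $\dot z_+=1/2$, $\dot x\sim r/[2(z-z_+)^2]$, whence $F\sim -2[(\psi'/\psi)(z_+)+3\pi\ti]\,r/(z-z_+)^2$; matching against $Ax^2$ and invoking \eqref{ld2} yields $A=-2[(\psi'/\psi)(z_+)+3\pi\ti]/r$. After theta-function simplification these values reduce to $A=-\chi$ and $C=\chi(2\zeta+1)/(\zeta+2)$, hence $B=-\chi(\zeta-1)/(\zeta+2)$ and $F=-\chi(x-1)[(\zeta+2)x+2\zeta+1]/(\zeta+2)$ as claimed.

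The main obstacle is the theta-function algebra in the last two evaluations. The residue $r$ of $x$ at $z_+$ must be computed explicitly from the product formula \eqref{xd}, and the theta quotients arising in $A$ via \eqref{ld2} and in $C$ via \eqref{ld} must be seen to collapse to $-\chi$ and $\chi(2\zeta+1)/(\zeta+2)$ through repeated use of the identities \eqref{zp} for $\zeta\pm 1,\zeta+2,2\zeta+1$ and the definition of $\chi$. A minor subtlety is tracking the sign of $\sqrt{x'(1/3)^2}$, which is fixed once one commits to the branch of $\sqrt{a}$ introduced after \eqref{a}.
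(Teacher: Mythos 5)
Your proof follows the paper's argument essentially verbatim: both show that the left-hand side is an even elliptic function with periods $1,\tau$ whose only possible singularities are double poles at the poles of $x$, hence a quadratic polynomial in $x$, and then pin down the coefficients by evaluating at $z=1/2$ (root at $x=1$), $z=1/3$ (constant term, via \eqref{ld}) and the pole $z=1/3+\tau/2$ (leading coefficient, via \eqref{ld2}). The one small correction: the sign of $x'(1/3)$ is not fixed by "committing to a branch of $\sqrt a$" (that convention concerns the operator $\sigma$ and is irrelevant here) --- it is an unambiguous value that should simply be read off from the explicit product formula \eqref{xp} at $z=1/3$, which is how the paper computes $\gamma$.
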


\begin{proof}
Let $q$ denote the left-hand side of \eqref{fd}. 
It is easy to check
that $q$ is an even elliptic function with periods $1$, $\tau$.
Thus, as a function of $z$, it 
 is a rational function of $ x(z)$.  By \eqref{xp},
 $ x'$ vanishes at all zeroes of $\psi$, so $q$ can have poles only where $ x$ has poles. This means that $q$ is  a polynomial in $ x$. 
Moreover, since $ x$ has only single poles, $q$ has at most double poles, which means that $q$ is a polynomial  of  degree at most $2$, say $q( x)=\alpha x^2+\beta x+\gamma$. 

 Since $ x(1/3)=0$, it is clear that
 $\dot x(1/3)=0$, and thus
$$\gamma=q(0)=q( x(1/3))=2\frac{\psi'}{\psi}(1/3) x'(1/3). $$
Using \eqref{ld} and \eqref{xp}, one readily computes
$$\gamma=4\pi^2(p^2;p^2)_\infty^4\tha(-p;p^2)\theta(-p\om;p^2)^3.$$ 

Next, differentiating the identity $ x^{-1}(1/3+\tau/2)=0$ gives
$$\frac{x'+2\dot x}{x^2}\Bigg|_{z=\frac 13+\frac\tau2}=0.$$ 
It follows that
$$\alpha=\lim_{z\rightarrow \frac 13+\frac\tau2}\frac1{x^2}\left(2\frac{\psi'}{\psi}x'-12\pi \ti \dot x\right)
=\lim_{z\rightarrow \frac 13+\frac\tau2}2\frac{x'}{x^2}\left(\frac{\psi'}\psi+3\pi \ti\right). $$
Using \eqref{ld2} and \eqref{xp}, this can be simplified to  
$\alpha=-\chi.$
By  \eqref{zp}, it follows that
$$\frac \gamma\alpha=-p^{-1}\frac{\tha(-p;p^2)\tha(-p\om;p^2)^3}{\tha(-1;\om)\tha(-\om;p^2)^3}=-\frac{2\zeta+1}{\zeta+2}. $$

Finally, we let $z=1/2$. Since $ x(1/2)=1$, $\dot x(1/2)=0$. 
Moreover, it is clear from \eqref{xp} that $ x'(1/2)=0$, while $\psi(1/2)\neq 0$. This shows that $q( x(1/2))=q(1)=0$.  We conclude that indeed
$$q=\alpha( x-1)\left( x-\frac \gamma\alpha\right)=-\chi( x-1)\left( x+\frac{2\zeta+1}{\zeta+2}\right). $$
\end{proof}


\begin{lemma}\label{zdl} The function $\dot\zeta=\partial\zeta/\partial\tau$ can be expressed as
$$12\pi \ti\dot\zeta=\chi(\zeta-1)(2\zeta+1). $$
\end{lemma}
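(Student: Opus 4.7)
The plan is to exploit the identity $2\zeta+1 = \xi_0 = x(0,\tau)$ from \eqref{xv}, which holds for all $\tau$. Differentiating in $\tau$ with $z=0$ fixed yields $2\dot\zeta = \dot x(0,\tau)$, reducing the problem to evaluating $\dot x(0,\tau)$.

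For this I take the limit $z\to 0$ in the identity of Lemma \ref{xdl},
$$2\frac{\psi'(z)}{\psi(z)}\,x'(z) - 12\pi\ti\,\dot x(z) = -\frac{\chi}{\zeta+2}\,(x-1)\bigl((\zeta+2)x+2\zeta+1\bigr).$$
The left-hand side has an indeterminate form $\infty\cdot 0$ at $z=0$, since $\psi$ is odd (so $\psi'/\psi = 1/z + O(z)$) while $x$ is even (so $x'(z) = x''(0)z + O(z^3)$); Taylor expansion resolves this and shows that the first term tends to $2x''(0)$. I compute $x''(0)$ from Lemma \ref{ddl}: the factorization $a(x)=(\zeta+2)^2\prod_j(x-\xi_j)$ in \eqref{a} together with the explicit values \eqref{xv} yields $a'(\xi_0) = 8\zeta(\zeta+1)^2(2\zeta+1)$, and hence $x''(0) = -2\chi(\zeta+1)(2\zeta+1)/(\zeta+2)$.

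Setting $x = \xi_0 = 2\zeta+1$ on the right-hand side and matching with the limit of the left-hand side gives
$$-\frac{4\chi(\zeta+1)(2\zeta+1)}{\zeta+2} - 12\pi\ti\,\dot x(0) = -\frac{2\chi\zeta(2\zeta+1)(\zeta+3)}{\zeta+2}.$$
Using $\zeta^2+\zeta-2 = (\zeta+2)(\zeta-1)$, solving for $\dot x(0)$ produces $12\pi\ti\,\dot x(0) = 2\chi(\zeta-1)(2\zeta+1)$, and combining with $\dot x(0) = 2\dot\zeta$ yields the claim. The only slightly delicate point is the $\infty\cdot 0$ limit at $z=0$; the remainder is elementary polynomial algebra in $\zeta$.
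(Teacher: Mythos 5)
Your proof is correct and follows essentially the same route as the paper: both set $z=0$ in Lemma \ref{xdl} and use $x(0)=\xi_0=2\zeta+1$, hence $\dot x(0)=2\dot\zeta$. The only difference is that you evaluate the indeterminate limit $\lim_{z\to 0}2(\psi'/\psi)x'=2x''(0)$ by Taylor expansion and then use \eqref{xsd} to get $x''(0)$ algebraically, whereas the paper computes the same quantity directly from the theta-product formulas \eqref{xp}, \eqref{zp} and \eqref{pdn}; your variant is, if anything, slightly cleaner.
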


\begin{proof}
Let $z=0$ in \eqref{fd}. By  \eqref{xp}, \eqref{zp} and \eqref{pdn},
$$\frac{\psi' x'}{\psi}(0)=\frac{8\pi^2\om(p^2;p^2)_\infty^4\tha(p;p^2)\tha(-p\om;p^2)^2}{\tha(-\om;p^2)\tha(p\om;p^2)^3}=-\frac{2\chi(\zeta+1)(2\zeta+1)}{\zeta+2}.$$
Since $ x(0)=2\zeta+1$, we have $\dot x(0)=2\dot\zeta$ and
$$B(x(0),\zeta)=2\zeta(2\zeta+1)(\zeta+3). $$
Combining these facts, we find that 
$$12\pi \ti\dot\zeta=\chi\left(\frac{\zeta(2\zeta+1)(\zeta+3)}{\zeta+2}-\frac{2(\zeta+1)(2\zeta+1)}{\zeta+2}\right),$$
which simplifies to the desired result.
\end{proof}

\begin{lemma}\label{pdl}
One may write
$$(\log\psi)''=C(\tau)+\frac{\chi}{\zeta+2}\,D, $$
where $C$ is independent of $z$ and
$$D(x,\zeta)=\frac{(x-\zeta)\big(x(\zeta+2)+\zeta(2\zeta+1)\big)^2}{(x-(2\zeta+1))\big((\zeta+2)x-\zeta\big)\big((\zeta+2)x-\zeta(2\zeta+1)\big)}.$$
\end{lemma}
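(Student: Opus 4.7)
The plan is to follow the template of Lemma \ref{xdl}: show that $(\log\psi)''$ is an even elliptic function of $z$ with periods $(1,\tau)$, hence a rational function of $x=x(z)$, and then pin it down by its pole structure. From the quasi-periodicity \eqref{pqp}, $\log\psi(z)$ transforms by additive terms linear in $z$ under $z\mapsto z+1$ and $z\mapsto z+\tau$; differentiating twice kills these, so $(\log\psi)''$ is doubly periodic. Since $\psi$ is odd, $\psi'/\psi$ is odd, whence $(\log\psi)''$ is even. Therefore $(\log\psi)''$ is a rational function of $x$.

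The three theta factors in \eqref{pse} exhibit the zeros of $\psi$ modulo $\mathbb Z+\tau\mathbb Z$ as precisely the half-periods $\gamma_0$, $\gamma_1$, $\gamma_2$, each simple; the remaining half-period $\gamma_3=1/2$ is regular, a fact already used in the proof of Lemma \ref{xdl} via $\psi(1/2)\neq 0$. Consequently $(\log\psi)''=\psi''/\psi-(\psi'/\psi)^2$ has double poles in $z$ at exactly $\gamma_0,\gamma_1,\gamma_2$. By evenness of $x$ about each half-period, $x'(\gamma_j)=0$; by Lemma \ref{ddl}, $x''(\gamma_j)\neq 0$ since $a'(\xi_j)\neq 0$. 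These double poles in $z$ therefore become simple poles in $x$ at $\xi_j=x(\gamma_j)$ for $j=0,1,2$, namely at $2\zeta+1$, $\zeta/(\zeta+2)$, and $\zeta(2\zeta+1)/(\zeta+2)$. Finally, $\psi$ is analytic and nonzero at $z=1/3+\tau/2$, so the rational function $(\log\psi)''$ is bounded as $x\to\infty$.

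Near each simple zero $\gamma_j$ of $\psi$ one has $(\log\psi)''(z)=-(z-\gamma_j)^{-2}+O(1)$, which combined with $x-\xi_j=\tfrac12 x''(\gamma_j)(z-\gamma_j)^2+O((z-\gamma_j)^4)$ gives
$$\operatorname{Res}_{x=\xi_j}(\log\psi)''=-\tfrac12\,x''(\gamma_j)=\frac{\chi\,a'(\xi_j)}{8\zeta(\zeta+1)(\zeta+2)}$$
using \eqref{xsd}. The factorization $a=(\zeta+2)^2\prod_l(x-\xi_l)$ from \eqref{a} together with the explicit values \eqref{xv} reduces these residues to polynomial expressions in $\zeta$, and a parallel elementary computation of $\operatorname{Res}_{x=\xi_j}\frac{\chi}{\zeta+2}D(x,\zeta)$ for $j=0,1,2$ shows that they agree. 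Thus $(\log\psi)''-\frac{\chi}{\zeta+2}D$ is a rational function of $x$ with no finite poles and bounded at infinity, hence independent of $z$; calling it $C(\tau)$ completes the proof. The only real obstacle is the bookkeeping in the residue comparison: each of the three identities collapses to an elementary rational equation in $\zeta$, but requires care to expand the numerator and denominator of $D$ cleanly at the three points $x=\xi_j$.
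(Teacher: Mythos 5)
Your argument is correct and is essentially the paper's proof: both establish that $(\log\psi)''$ and $\chi D/(\zeta+2)$ are even elliptic functions whose only singularities sit at the three zeros of $\psi$, match the singular parts there, and conclude by Liouville; your residue condition $8\zeta(\zeta+1)P(\xi_j)=a'(\xi_j)Q'(\xi_j)$ is exactly the polynomial identity the paper verifies at $x=2\zeta+1$, $\zeta/(\zeta+2)$, $\zeta(2\zeta+1)/(\zeta+2)$. The only cosmetic difference is that you compare residues of simple poles in the variable $x$, while the paper compares coefficients of the double poles in $z$ via l'H\^opital and \eqref{sd}.
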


\begin{proof}
It is easy to see that both  $(\log\psi)''$ 
and $D$ are even elliptic function with periods $1$ and $\tau$, the only singularities being double poles at the zeroes of $\psi$.
If we can show that $\lim_{\psi(z)\rightarrow 0}\psi^2((\log\psi)''-\chi D/(\zeta+2))=0$, then the conclusion follows from Liouville's theorem.   

Clearly,
$$\lim_{\psi(z)\rightarrow 0}\psi^2(\log\psi)''=-\lim_{\psi(z)\rightarrow 0}(\psi')^2. $$
If we let $P$ and $Q$ denote the numerator and denominator of $D$, respectively, then l'H\^{o}pital's rule and \eqref{sd} give
\begin{equation*}\begin{split}\lim_{\psi(z)\rightarrow 0}\frac{\psi^2\chi D}{\zeta+2}&=\lim_{\psi(z)\rightarrow 0}\frac{2 (\psi')^2\chi P}{(\zeta+2)Q''}=-\lim_{\psi(z)\rightarrow 0}\frac{8\zeta(\zeta+1)(\psi')^2P}{\left(2a\frac{\partial^2 Q}{\partial x^2}+\frac{\partial a}{\partial x}\frac{\partial Q}{\partial x}\right)}\\
&=-\lim_{\psi(z)\rightarrow 0}\frac{8\zeta(\zeta+1)(\psi')^2P}{\frac{\partial a}{\partial x}\frac{\partial Q}{\partial x}}.
\end {split}\end{equation*}
We are now reduced to  verifying the polynomial identity
$$8\zeta(\zeta+1)P=\frac{\partial a}{\partial x}\frac{\partial Q}{\partial x} $$
at the three points $x=2\zeta+1$, $x=\zeta/(\zeta+2)$ and $x=\zeta(2\zeta+1)/(\zeta+2)$, corresponding to the three zeroes modulo $\mathbb Z+\tau\mathbb Z$
of $\psi$.
\end{proof}

\begin{lemma}\label{pfl}
The modified potential \eqref{pot} can be expressed as
$$V(z)=\frac{\chi}{2\zeta(\zeta+1)(\zeta+2)}\,W(x), $$
where $W$ is as in \eqref{wp}.
\end{lemma}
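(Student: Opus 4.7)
My plan is to verify the asserted identity summand by summand: the $j$-th term $k_j(k_j+1)/\phi(z-\gamma_j)^2$ of $V(z)$ matches the $j$-th summand of $\chi W(x)/(2\zeta(\zeta+1)(\zeta+2))$. The key reduction is that each $1/\phi(z-\gamma_j)^2$ is an even elliptic function of $z$ with periods $(1,\tau)$, hence a rational function of $x=x(z)$. Indeed, $\phi$ is odd (by \eqref{phi} together with the theta-inversion $\theta(1/y;p)=-\theta(y;p)/y$) and doubly periodic with periods $1$ and $\tau$, so after squaring and translating by the half-period $\gamma_j$ one obtains the claimed even doubly periodic function.

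To identify this rational function I read off its divisor on $\mathbb P^1_x$. From \eqref{phi}, $\phi$ has simple zeros on $\frac 13\mathbb Z+\tau\mathbb Z$ and simple poles on $\frac 16+\frac 13\mathbb Z+\tau\mathbb Z$. Pushing the divisor of $1/\phi(z-\gamma_j)^2$ forward through the 2-to-1 cover $z\mapsto x$, branched at the $\gamma_j$'s, and using \eqref{xv} to identify the $x$-values of each pre-image, one finds a simple pole at $\xi_j$, a double pole at $\eta_j$, a simple zero at $\xi_{j'}$, and a double zero at $\eta_{j'}$, where $j\leftrightarrow j'$ is the involution $0\leftrightarrow 3$, $1\leftrightarrow 2$ induced by $\gamma_j\mapsto\gamma_j+1/2$. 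This divisor matches precisely that of the $j$-th summand of $W(x)$, so they agree up to a $\zeta$-dependent constant. The constant is fixed using $\phi'(0)=1$ from \eqref{phd}: near $z=\gamma_j$ one has $1/\phi(z-\gamma_j)^2=(z-\gamma_j)^{-2}+O(1)$, whereas at the branch point $x-\xi_j=\tfrac 12 x''(\gamma_j)(z-\gamma_j)^2+O((z-\gamma_j)^4)$ with $x''(\gamma_j)=-\chi\, a'(\xi_j)/(4\zeta(\zeta+1)(\zeta+2))$ from \eqref{xsd}. Matching leading singular behaviors, and computing $a'(\xi_j)$ directly from the product \eqref{a}---for example, $a'(2\zeta+1)=8\zeta(\zeta+1)^2(2\zeta+1)$ in the $j=0$ case, which yields $-\chi(2\zeta+1)^3/(2\zeta(\zeta+1)(\zeta+2))$ as the coefficient of $(x-1)(x-\zeta)^2/(x^2(x-(2\zeta+1)))$---the overall constant comes out to $\chi/(2\zeta(\zeta+1)(\zeta+2))$ times the numerical coefficient of the $j$-th term in \eqref{wp}.

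The main obstacle is the bookkeeping required for all four cases: one must evaluate $a'(\xi_j)$ at each half-period and simplify via \eqref{zp} to confirm that the same prefactor $\chi/(2\zeta(\zeta+1)(\zeta+2))$ emerges uniformly for $j=0,1,2,3$. The computation is routine but somewhat intricate; alternatively, the four cases can be linked by the $S_4$-symmetries from \cite[Cor.\ 2.19]{r1a}, which permute the half-periods $\gamma_j$ and the indices $k_j$ (acting nontrivially on $\zeta$), allowing the four verifications to be reduced to one.
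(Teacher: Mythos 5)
Your proposal is correct and follows essentially the same route as the paper: identify each term $k_j(k_j+1)/\phi(z-\gamma_j)^2$ as a rational function of $x$ via Liouville's theorem (your divisor bookkeeping just makes explicit what the paper leaves implicit), then fix the $\zeta$-dependent constant by a local expansion using $\phi'(0)=1$ and Lemma \ref{ddl}, with the remaining terms handled either by repetition or by the symmetries of \cite{r1a}. The only difference is where you normalize --- at the branch point $x=\xi_j$ using $x''$ from \eqref{xsd}, whereas the paper works at $z=1/3$ (i.e.\ at the double pole $x=\eta_0$) using $(x')^2$ from \eqref{xds} --- and your quoted value $a'(2\zeta+1)=8\zeta(\zeta+1)^2(2\zeta+1)$ does reproduce the coefficient $-\chi(2\zeta+1)^3/2\zeta(\zeta+1)(\zeta+2)$ obtained in the paper.
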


\begin{proof}
Although it is straightforward to check this from \eqref{xd} and  \eqref{xv}, we will 
use a different method. By Liouville's theorem,  the  first term in \eqref{pot} can be written
$$\frac{k_0(k_0+1)}{\phi(z)^2}
=C\frac{(x-1)(x-\zeta)^2}{x^2(x-(2\zeta+1))},
 $$
with $C$ independent of $z$. 
We rewrite this
as
$$C{\phi(z)^2}=k_0(k_0+1)\frac{x^2(x-(2\zeta+1))}{(x-1)(x-\zeta)^2}, $$
and apply  $\partial^2/\partial z^2$ at the point 
 $z=1/3$ to both sides. 
Using \eqref{xds} and the fact that $\phi(1/3)=0$ and $\phi'(1/3)^2=1$, we obtain
$$2C=2k_0(k_0+1)x'(1/3)^2
\frac{(x-(2\zeta+1))}{(x-1)(x-\zeta)^2}\Bigg|_{x=0}=-k_0(k_0+1)\frac{\chi(2\zeta+1)^3}{\zeta(\zeta+1)(\zeta+2)}.$$
This gives the first term in the expression for $W$. The other terms 
can be treated similarly, or be derived from the first term using
 \cite[Lemma 2.7]{r1a}.
\end{proof}

We can now  write \eqref{sem} in algebraic form.
We choose the function $\Psi$ in Theorem \ref{semt} as in \eqref{eps}.
We express the left-hand side of \eqref{sem} using \eqref{pu},  and then
apply Lemmas \ref{ddl},  \ref{xdl},  \ref{zdl},  \ref{pdl} and  \ref{pfl}. The term involving the constant 
$C(\tau)$ from Lemma \ref{pdl} is moved to the right-hand side. 
Finally, we multiply the resulting equation through with $-2\zeta(\zeta+1)(\zeta+2)/\chi$. We find that, up to a factor independent of the variables $x_j$,
\begin{multline}\label{dnc}\left(\sum_{j=1}^m \left(a(x_j)\frac{\partial^2}{\partial x_j^2}+b(x_j)\frac{\partial}{\partial x_j}+C(x_j)\right)+md\frac{\partial}{\partial\zeta}\right)\prod_{j=1}^mF(x_j)\Delta T_n^{(\mathbf k)}\\
\sim\prod_{j=1}^mF(x_j)\Delta T_n^{(\mathbf k)}, \end{multline}
where $a$ and $d$ are as in Theorem \ref{pdet},
$$
b=\frac 12 \frac{\partial a}{\partial x}+2m\zeta(\zeta+1)B+2a\frac{\partial E/\partial x}{E},$$
which agrees with \eqref{b} and $C=W+C_0$, with
\begin{align*}
C_0&=2m(m-1)\zeta(\zeta+1)D+a\frac{\partial^2 E/\partial x^2}{E}\\
&\quad+\left(\frac 12 \frac{\partial a}{\partial x}+2m\zeta(\zeta+1)B\right)\frac{\partial E/\partial x}{E}+m d\frac{\partial E/\partial \zeta}{E}.\end{align*}
One may check that, with $c_0$ as in \eqref{cn}, $C_0-c_0$ is independent of $x$. 
Thus,   \eqref{dnc}
can be written
\begin{equation}\label{usm}
\Omega\prod_{j=1}^mF(x_j)\Delta(x_1,\dots,x_m) T_n^{(\mathbf k)}(x_1,\dots,x_m)=0,
\end{equation}
where 
\begin{equation}\label{rom}\Omega=\sum_{j=1}^m \left(a(x_j)\frac{\partial^2}{\partial x_j^2}+b(x_j)\frac{\partial}{\partial x_j}+c(x_j)\right)+m\, d\frac{\partial}{\partial\zeta}+e, \end{equation}
for some yet unknown function $e$ of $\zeta$.
To prove Theorem \ref{pdet}, it remains to show that
$e$ is given by \eqref{ee}.

\subsection{The constant term}
\label{cts}

To compute the constant term $e$ in \eqref{rom}, 
we will first prove \eqref{ee} for $\mathbf k=(0,0,0,0)$, 
and 
then proceed by induction on $\sum_j |k_j|$.
For both the base case and the induction step, 
our method is based on investigating limits
of \eqref{usm} when all the variables coincide.

For the proof of the next lemma, we
 will need the elementary identity 
\begin{equation}\label{dd}\sum_{j=1}^m x_j^k \frac{\partial^k}{\partial x_j^k}\,\Delta(\mathbf x)=k!\binom{m}{k+1}\Delta(\mathbf x).\end{equation}
To see this, note that
 the left-hand side is an anti-symmetric homogeneous polynomial of the same degree as $\Delta(\mathbf x)$, and thus proportional to $\Delta(\mathbf x)$. The value of the constant follows since the coefficient of $x_2x_3^2\dotsm x_m^{m-1}$ on the left-hand side is
$$\sum_{j=1}^m(j-1)(j-2)\dotsm(j-k)=k!\binom{m}{k+1}. $$

\begin{lemma}\label{sdl}
Suppose that $P$ is a symmetric formal power series in $m$ variables, whose Taylor expansion at $0$ is given by
$$\alpha+\beta\sum_{j=1}^{m}x_j+\gamma\sum_{j=1}^{m}x_j^2+\delta\sum_{1\leq j<k\leq m}x_jx_k+\text{\emph{higher order terms}} $$
and let $f$ be a formal power series
$$f(x)=a+bx+cx^2+\text{\emph{higher order terms}}. $$
Then, 
\begin{align}\nonumber\operatorname{C.T.}\frac 1{\Delta(\mathbf x)}\sum_{j=1}^m f(x_j)\frac{\partial^2}{\partial x_j^2}\,\Delta(x)P(x)&=2\binom m3 c\alpha+2\binom m2b\beta \\
\label{sda}&\quad+2m^2a\gamma -2\binom m2a\delta ,\\
\label{sdb}\operatorname{C.T.}\frac 1{\Delta(\mathbf x)}\sum_{j=1}^m f(x_j)\frac{\partial}{\partial x_j}\,\Delta(x)P(x)&=\binom m2b\alpha+ma\beta,
\end{align}
where $\operatorname{C.T.}$ stands for the constant term.
\end{lemma}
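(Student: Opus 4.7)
The plan is to expand $\partial_{x_j}^k(\Delta P)$ via the Leibniz rule ($k=1$ for \eqref{sdb}, $k=2$ for \eqref{sda}), divide by $\Delta$, and compute the constant term piece by piece using \eqref{dd} together with a small number of derived vanishing identities. The operator $\sum_j f(x_j)\partial_j^k$ commutes with permutations of $x_1,\dots,x_m$, so each Leibniz piece is separately antisymmetric, and after dividing by $\Delta$ each becomes a symmetric polynomial with a well-defined constant term.

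The preparatory step is to complement \eqref{dd} with the identities $\sum_j\partial_j\Delta=0$, $\sum_j\partial_j^2\Delta=0$, and $\sum_j x_j\partial_j^2\Delta=0$. The first is translation invariance of $\Delta$; the other two follow by applying $\sum_k\partial_k$ (repeatedly) to the $k=2$ case of \eqref{dd} and using $\sum_k\partial_k\Delta=0$. The second key observation is a degree count: for any $k,l\ge 0$, $\frac{1}{\Delta}\sum_j x_j^k\partial_j^l\Delta$ is a symmetric homogeneous polynomial of degree $k-l$ (or zero if $k<l$), hence vanishes at $\mathbf{x}=\mathbf{0}$ unless $k=l$. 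Consequently only finitely many Taylor coefficients of $f$ and $P$ can contribute to the constant term, and these are exactly the coefficients $a,b,c$ and $\alpha,\beta,\gamma,\delta$ appearing in the statement.

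For \eqref{sdb}, Leibniz yields $\frac{P}{\Delta}\sum_j f(x_j)\partial_j\Delta+\sum_j f(x_j)\partial_j P$. The second sum evaluates at the origin to $ma\beta$; the first, by the degree argument and the $k=1$ case of \eqref{dd}, contributes $b\binom{m}{2}\alpha$. For \eqref{sda}, Leibniz gives three pieces. The third, $\sum_j f(x_j)\partial_j^2P$, evaluates to $2ma\gamma$ at the origin. The first, $\frac{P}{\Delta}\sum_j f(x_j)\partial_j^2\Delta$, contributes $2c\binom{m}{3}\alpha$ by the derived vanishings and the $k=2$ case of \eqref{dd}. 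For the middle piece, one Taylor-expands $\partial_j P=\beta+2\gamma x_j+\delta(p_1-x_j)+\cdots$ (where $p_1=\sum_i x_i$) and multiplies by $f(x_j)$; the degree-one-in-$\mathbf{x}$ part equals $(2a\gamma-a\delta+b\beta)x_j+a\delta\,p_1$, and the $p_1$ summand dies by $\sum_j\partial_j\Delta=0$ while the $x_j$ coefficient survives via $\sum_j x_j\partial_j\Delta=\binom{m}{2}\Delta$, yielding $2(2a\gamma-a\delta+b\beta)\binom{m}{2}$ once the Leibniz factor of $2$ is included. Combining the three contributions and using $4\binom{m}{2}+2m=2m^2$ reproduces the right-hand side of \eqref{sda}.

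The main obstacle is purely bookkeeping: one must verify that no higher-order Taylor coefficient of $f$ or $P$ smuggles a contribution into the constant term. The degree count above, which forces symmetric homogeneous polynomials of strictly positive degree to vanish at the origin, makes this systematic, and the remainder is a careful enumeration of the surviving terms.
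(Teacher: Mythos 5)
Your proof is correct and follows essentially the same route as the paper: the same Leibniz split into three pieces, the same homogeneity/degree count to isolate which Taylor coefficients of $f$ and $P$ can contribute, and the same use of \eqref{dd} together with $\sum_j\partial\Delta/\partial x_j=0$ (your extra vanishings $\sum_j\partial_j^2\Delta=\sum_j x_j\partial_j^2\Delta=0$ are just instances of the same degree argument the paper invokes implicitly). All the arithmetic, including the recombination $4\binom m2+2m=2m^2$, checks out.
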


\begin{proof}
 We split the left-hand side of \eqref{sda} as
$$\operatorname{C.T.}\frac 1{\Delta}\sum_{j=1}^m f(x_j)\left(\frac{\partial^2\Delta}{\partial x_j^2}\,P(x)+2
\frac{\partial\Delta}{\partial x_j}\frac{\partial P}{\partial x_j}+\Delta\frac{\partial^2 P}{\partial x_j^2}\right)=S_1+S_2+S_3.$$
By homogeneity, only the quadratic term in $f$ contributes to $S_1$. 
It then follows from  \eqref{dd} that $S_1=2\binom m3c\alpha  $. In $S_2$, we get  contributions from the linear and constant terms in $f$.
Again by \eqref{dd}, the linear term contributes  $2\binom m2b\beta $, whereas the constant term  contributes
\begin{multline*}\operatorname{C.T.}\frac {2a}{\Delta}\sum_{j=1}^m 
\frac{\partial\Delta}{\partial x_j}\frac{\partial}{\partial x_j}\left(\gamma\sum_{k=1}^{m}x_k^2+\delta\sum_{1\leq k<l\leq m}x_kx_l\right)\\
=
\operatorname{C.T.}\frac {2a}{\Delta}\sum_{j=1}^m 
\frac{\partial\Delta}{\partial x_j}\left((2\gamma-\delta) x_j+\delta\sum_{k=1}^m x_k\right)=2\binom m2a(2\gamma-\delta),
\end{multline*}
where we used  that
 $\sum_{j=1}^m \partial\Delta/\partial x_j=0$ (as it is an anti-symmetric homogeneous polynomial of lower degree than $\Delta$). Finally, $S_3=2m a\gamma$, which completes the proof of \eqref{sda}.
The proof of \eqref{sdb} is similar.
\end{proof}

We want to apply Lemma \ref{sdl} to the case 
$\mathbf k=(0,0,0,0)$ of
\eqref{usm}. For this purpose, we  need  
the lowest terms in the Taylor expansion of $T_n^{(0,0,0,0)}$.

\begin{lemma}\label{btl}
For $m=2n\geq 1$, the Taylor expansion of $T_n^{(0,0,0,0)}(x_1,\dots,x_{m})$ around $x_1=\dots= x_{m}=0$ has the form
$$\alpha_n+\beta_n\sum_{j=1}^{m}x_j+\gamma_n\sum_{j=1}^{m}x_j^2+\delta_n\sum_{1\leq j<k\leq m}x_jx_k+\text{\emph{higher order terms}}, $$
where
\begin{align*}
\alpha_n&=\zeta^{n(n-1)}(2\zeta+1)^{n(n-1)}\\
\beta_n&=-(n-1)\zeta^{n(n-1)}(2\zeta+1)^{n^2-n-1},\\
\gamma_n&=\frac{(n-1)(n-2)}{2}\,\zeta^{n(n-1)}(2\zeta+1)^{(n+1)(n-2)},\\
\delta_n&=(n-1)^2\zeta^{n(n-1)}(2\zeta+1)^{(n+1)(n-2)}.
\end{align*} 
\end{lemma}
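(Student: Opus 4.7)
The plan is to compute the four Taylor coefficients $\alpha_n,\beta_n,\gamma_n,\delta_n$ directly from the formula \eqref{gdf}. Using the identity $\prod_{i,j}a_{ij}\det(1/a_{ij})=\sum_\sigma\operatorname{sgn}(\sigma)\prod_i\prod_{j\ne\sigma(i)}a_{ij}$ applied to $a_{ij}=G(x_i,y_j)$, rewrite
$$\Delta(\mathbf x)\Delta(\mathbf y)\,T_n^{(0,0,0,0)}(\mathbf x,\mathbf y)=\sum_{\sigma\in S_n}\operatorname{sgn}(\sigma)\prod_{i=1}^n\prod_{j\ne\sigma(i)}G(x_i,y_j),$$
where $\mathbf x=(x_1,\ldots,x_n)$ and $\mathbf y=(x_{n+1},\ldots,x_{2n})$. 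Decompose $G=G_1+G_2+G_3$ into its homogeneous pieces, with $G_1(x,y)=\zeta(2\zeta+1)(x+y)$, $G_2(x,y)=-\zeta(x^2+y^2)-2(\zeta^2+3\zeta+1)xy$ and $G_3(x,y)=(\zeta+2)xy(x+y)$. A product of $n(n-1)$ such factors starts in total degree $n(n-1)$, matching $\Delta(\mathbf x)\Delta(\mathbf y)$, so the Taylor expansion of $T_n^{(0,0,0,0)}$ to total degree $2$ is determined by tracking contributions of total degree at most $n(n-1)+2$ in the numerator.

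Replacing every factor by $G_1$ and invoking Cauchy's identity $\det\bigl(1/(x_i+y_j)\bigr)=\Delta(\mathbf x)\Delta(\mathbf y)/\prod_{i,j}(x_i+y_j)$ immediately yields the leading term $\alpha_n=\zeta^{n(n-1)}(2\zeta+1)^{n(n-1)}$. The higher-order contributions come from retaining one factor of $G_2$ or $G_3$ at a position $(i_0,j_0)$, or two factors of $G_2$ at $(i_0,j_0)$ and $(i_1,j_1)$, with all remaining factors kept as $G_1$. Each single-replacement inner sum simplifies by writing $\prod_{j\ne\sigma(i)}(x_i+y_j)=\prod_{i,j}(x_i+y_j)/\prod_i(x_i+y_{\sigma(i)})$ and separating the terms with $\sigma(i_0)=j_0$; the remaining sum is a combination of the full Cauchy determinant and the $(i_0,j_0)$-minor, the latter being $\Delta(\mathbf x^{(i_0)})\Delta(\mathbf y^{(j_0)})/\prod_{i\ne i_0,\,j\ne j_0}(x_i+y_j)$ by Cauchy's identity on the reduced index sets. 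Taylor expanding the resulting polynomial at $\mathbf x=\mathbf y=0$ and summing over $(i_0,j_0)$ yields $\beta_n$ together with the one-replacement contributions to $\gamma_n$ and $\delta_n$.

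The main obstacle is the double-$G_2$ contribution, where a double sum over distinct removed pairs $(i_0,j_0),(i_1,j_1)$ must be split by whether the pairs share a row, share a column, or are disjoint, and reassembled via the $2\times2$ Cauchy minor identity. A useful shortcut and sanity check is that the four claimed coefficients satisfy $\alpha_n\delta_n=\beta_n^2$ and $2(n-1)\alpha_n\gamma_n=(n-2)\beta_n^2$, which is precisely the statement that the quadratic truncation of $T_n^{(0,0,0,0)}$ agrees with that of $\alpha_n\prod_{j=1}^{2n}\bigl(1-x_j/(2\zeta+1)\bigr)^{n-1}$; once $\alpha_n$ and $\beta_n$ are established, verifying one of these relations (by matching a single coefficient obtained from the double-$G_2$ term, such as that of a mixed monomial $x_{i_0}y_{j_0}x_{i_1}y_{j_1}$ with distinct indices) determines the remaining coefficients and circumvents the full combinatorial calculation.
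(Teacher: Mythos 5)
Your strategy---rewriting $\prod_{i,j}G\cdot\det(1/G)$ as $\sum_\sigma\sgn(\sigma)\prod_i\prod_{j\ne\sigma(i)}G(x_i,y_j)$, splitting $G$ into its homogeneous parts and invoking the Cauchy determinant---is sound, and your derivation of $\alpha_n$ is complete and correct. Beyond that, however, the proposal is a plan rather than a proof, and the plan has a genuine logical gap at quadratic order. First, passing from the homogeneous pieces of the numerator to the coefficients $\beta_n,\gamma_n,\delta_n$ of $T_n^{(0,0,0,0)}$ requires dividing by $\Delta(\mathbf x)\Delta(\mathbf y)$: the individual $(i_0,j_0)$-terms you produce are rational functions, only their sum is a polynomial divisible by the Vandermondes, and you do not explain how the quotient's coefficients are to be extracted; ``Taylor expanding the resulting polynomial'' is not yet an algorithm. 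Second, and more seriously, the proposed shortcut does not close the argument: the identities $\alpha_n\delta_n=\beta_n^2$ and $2(n-1)\alpha_n\gamma_n=(n-2)\beta_n^2$ are consequences of the statement to be proved, not known facts, and verifying ``one of these relations'' supplies one equation for the two unknowns $\gamma_n,\delta_n$. You would still need two independent pieces of data from the degree-$(n(n-1)+2)$ part of the numerator, at least one of which involves the double-$G_2$ sum you acknowledge as the main obstacle; moreover the monomial $x_{i_0}y_{j_0}x_{i_1}y_{j_1}$ you point to lives in the numerator, not in the quadratic truncation of $T_n^{(0,0,0,0)}$, and the bookkeeping relating the two is exactly what is missing.

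For comparison, the paper sidesteps all of this with a specialization recursion: expanding the determinant in \eqref{gdf} along its first row and setting $x_1=x_{n+1}=0$ kills every term but one (since $G(0,0)=0$ while $G(0,y)=\zeta y(2\zeta+1-y)$), giving
$$T_n^{(0,0,0,0)}(x_1,\dots,x_{2n-2},0,0)=\zeta^{2n-2}\prod_{j=1}^{2n-2}(2\zeta+1-x_j)\,T_{n-1}^{(0,0,0,0)}(x_1,\dots,x_{2n-2}).$$
Since setting two variables to zero preserves the coefficients $\alpha,\beta,\gamma,\delta$ of a symmetric quadratic truncation, this yields a closed recursion for all four coefficients, solved from $\alpha_1=1$, $\beta_1=\gamma_1=\delta_1=0$. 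Your observation that the claimed answer is the quadratic truncation of $\alpha_n\prod_j\bigl(1-x_j/(2\zeta+1)\bigr)^{n-1}$ is precisely what this recursion produces upon iteration; if you wish to salvage your route, proving that product formula for the specialization with only two variables left free would be the cleanest way to pin down $\gamma_n$ and $\delta_n$ without the double-$G_2$ combinatorics.
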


\begin{proof}
Expand \eqref{gdf} along the first row and then let
 $x_1=x_{n+1}=0$. Then, only the first term gives a non-zero contribution. Rewriting the complementary minor in terms of $T_{n-1}^{(0,0,0,0)}$ gives after relabelling the parameters
$$T_n^{(0,0,0,0)}(x_1,\dots,x_{2n-2},0,0)=\zeta^{2n-2}\prod_{j=1}^{2n-2}(2\zeta+1-x_j) 
T_{n-1}^{(0,0,0,0)}(x_1,\dots,x_{2n-2}).$$
This leads to the  system of recursions
\begin{align*} \alpha_n&=\zeta^{2n-2}(2\zeta+1)^{2n-2}\alpha_{n-1},\\
\beta_n&=\zeta^{2n-2}(2\zeta+1)^{2n-3}\big((2\zeta+1)\beta_{n-1}-\alpha_{n-1}\big),\\
\gamma_n&=\zeta^{2n-2}(2\zeta+1)^{2n-3}\big((2\zeta+1)\gamma_{n-1}-\beta_{n-1}\big),\\
\delta_n&=\zeta^{2n-2}(2\zeta+1)^{2n-4}\big((2\zeta+1)^2\delta_{n-1}-2(2\zeta+1)\beta_{n-1}+\alpha_{n-1}\big),\end{align*}
which is easily solved from the initial value $\alpha_1=1$, $\beta_1=\gamma_1=\delta_1=0$.
\end{proof}

We can now prove \eqref{usm} for $\mathbf k=(0,0,0,0)$. This result
has  been obtained by Zinn-Justin \cite[\S 4.2.2]{zj}, see \cite[\S 5.3]{r1a} for the precise relation to the notation used there.

\begin{lemma}[Zinn-Justin]\label{zjl}
\emph{Theorem \ref{pdet}} holds for $\mathbf k=(0,0,0,0)$.
\end{lemma}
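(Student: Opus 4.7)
The plan is to pin down the unknown function $e = e(\zeta)$ in the operator $\Omega$ of \eqref{rom} for $\mathbf k = (0,0,0,0)$ and check that it agrees with the value given by \eqref{ee}. In this special case $F \equiv 1$ and the prefactor $f$ from Theorem \ref{pdet} is identically $1$, so \eqref{ee} predicts $e = 0$; thus the goal is simply to verify $e = 0$.

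Since \eqref{usm} tells us that $\Omega(\Delta T_n^{(0,0,0,0)}) = 0$ for some (a priori unknown) $e(\zeta)$, I would extract a single scalar relation by taking the constant term at $\mathbf x = 0$ of $\Delta(\mathbf x)^{-1}\,\Omega(\Delta T_n^{(0,0,0,0)})$. For the differential parts $\sum_j a(x_j)\partial_{x_j}^2$ and $\sum_j b(x_j)\partial_{x_j}$, Lemma \ref{sdl} applies directly with $P = T_n^{(0,0,0,0)}$, reading the four Taylor coefficients $\alpha = \alpha_n$, $\beta = \beta_n$, $\gamma = \gamma_n$, $\delta = \delta_n$ from Lemma \ref{btl} and picking off the constant, linear and quadratic coefficients of $a$ and $b$ at $x = 0$ from \eqref{a} and \eqref{b}. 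For the multiplication term, note that $W \equiv 0$ when $\mathbf k = (0,0,0,0)$, so $c = c_0$, and the constant term of $\sum_j c_0(x_j)\, T_n^{(0,0,0,0)}(\mathbf x)$ at $\mathbf x = 0$ is just $m\,c_0(0)\,\alpha_n$. The term $m d\,\partial/\partial\zeta$ contributes $m d\,d\alpha_n/d\zeta$, and the $e$ term contributes $e\,\alpha_n$.

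Summing these four pieces and setting the result to zero yields a linear equation for $e(\zeta)$, which is then solved and checked to equal $0$. The main obstacle is entirely bookkeeping: one must verify an algebraic identity in $\zeta$ balancing the explicit constants from Lemma \ref{btl} (with their large powers of $\zeta$ and $2\zeta+1$) against the polynomial $c_0$ of \eqref{cn} and the low-order Taylor coefficients of $a$ and $b$. Since the same identity has already been established by Zinn-Justin in \cite[\S 4.2.2]{zj}, and the notational dictionary with \cite[\S 5.3]{r1a} makes the translation explicit, an alternative is to invoke his result directly and bypass the computation altogether.
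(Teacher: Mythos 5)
Your proposal is correct and follows essentially the same route as the paper: extract the constant term of $\Delta^{-1}\Omega(\Delta T_n^{(0,0,0,0)})$ at $\mathbf x=0$ via Lemma \ref{sdl}, feed in the Taylor coefficients $\alpha_n,\beta_n,\gamma_n,\delta_n$ from Lemma \ref{btl} together with the low-order coefficients of $a$, $b$ and $c_0$, and solve the resulting linear relation to get $e=0$, consistent with \eqref{ee} since $F\equiv 1$ and $f\equiv 1$ here. The paper carries out exactly this computation rather than citing Zinn-Justin's result directly.
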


\begin{proof}
Applying Lemma \ref{sdl} to the case  $\mathbf k=(0,0,0,0)$ of \eqref{usm} gives
\begin{multline*}\left(\binom m 3 \frac{\partial^2 a}{\partial x^2}+\binom m2 \frac{\partial b}{\partial x}+mc_0+e\right)\Bigg|_{x=0}\alpha_{n}
+\left(2\binom m2 \frac{\partial a}{\partial x}+mb\right)\Bigg|_{x=0}\beta_n\\
+a(0)\left(2m^2 \gamma_n-2\binom m2\delta_n\right)+md \frac{\partial \alpha_n}{\partial \zeta}=0,
 \end{multline*}
where $m=2n$. Inserting the expressions given in Theorem \ref{pdet} and
Lemma \ref{btl} yields $e=0$, in agreement with \eqref{ee}.
\end{proof}

It seems difficult to extend this proof to general
 $\mathbf k$. To proceed, we write \eqref{usm} 
as in \eqref{usf} (with $e$ still unknown), and 
 then  let all variables $x_j$ tend to $\xi_l$. Although, in general, $b_F$, $c_F$ and $T_n^{(\mathbf k)}$ have poles, 
they are regular   at the point
$\xi_l$. Thus, we may apply Lemma \ref{sdl}, with $x_j$ replaced by $x_j-\xi_l$. Since $a(\xi_l)=0$, the result simplifies to
\begin{subequations}\label{usls}
\begin{multline}\label{usl}\left(\binom m 3 \frac{\partial^2 a}{\partial x^2}+\binom m2 \frac{\partial b_F}{\partial x}+mc_F+e\right)\Bigg|_{x=\xi_l}\alpha
+\left(2\binom m2 \frac{\partial a}{\partial x}+mb_F\right)\Bigg|_{x=\xi_l}\beta\\
+md\varepsilon=0,
 \end{multline}
 where
$$\alpha=T_n^{(\mathbf k)}(\xi_l^{(m)}),\qquad \beta=\frac{\partial T_n^{(\mathbf k)}}{\partial x_1}(\xi_l^{(m)}),\qquad \varepsilon=\frac{\partial T_n^{(\mathbf k)}}{\partial \zeta}(\xi_l^{(m)}). $$

Consider now 
 \eqref{usl}, with $\mathbf k$  replaced by $\mathbf k-\mathbf e_l$ (with $\mathbf e_l$ 
a unit vector) and $m$ by $m+1$. 
By \eqref{ten},
$$T_n^{(\mathbf k)}(x_1,\dots,x_m)=T_n^{(\mathbf k-\mathbf e_l)}(x_1,\dots,x_m,\xi_l). $$
It follows that, as the indices change,
 $\alpha$ and $\beta$ remain the same whereas $\varepsilon$ is replaced by $\varepsilon-\beta\partial\xi_l/\partial\zeta$. 
Thus,
\begin{multline}\label{uslb}\left(\binom {m+1} 3 \frac{\partial^2 a}{\partial x^2}+\binom {m+1}2 \frac{\partial \tilde b_F}{\partial x}+(m+1)\tilde c_F+\tilde e\right)\Bigg|_{x=\xi_l}\alpha\\
+\left(2\binom {m+1}2 \frac{\partial a}{\partial x}+(m+1)\tilde b_F-(m+1)d\frac{\partial\xi_l}{\partial\zeta}\right)\Bigg|_{x=\xi_l}\beta
+(m+1)d\varepsilon=0,
 \end{multline}
\end{subequations}
where $\sim$ signifies the change $(\mathbf k,m)\mapsto (\mathbf k-\mathbf e_l,m+1)$
in the coefficients depending on these indices.

Eliminating $\varepsilon$ from the equations \eqref{usls},
the resulting coefficient of $\beta$ is
\begin{multline*}
(m+1)\left(2\binom m2 \frac{\partial a}{\partial x}+mb_F\right)\\
-m\left(2\binom {m+1}2 \frac{\partial a}{\partial x}+(m+1)\tilde b_F-(m+1)d\frac{\partial\xi_l}{\partial\zeta}\right)
\Bigg|_{x=\xi_l}=0,
\end{multline*}
by a direct computation.
  Since we know from \cite[Cor.\ 3.9]{r1a} that
$\alpha=T_n^{(\mathbf k+me_l)}$ does not vanish identically, it follows that
\begin{multline}\label{eer}
(m+1)\left(\binom m 3 \frac{\partial^2 a}{\partial x^2}+\binom m2 \frac{\partial b_F}{\partial x}+mc_F+e\right)\\
-m\left(\binom {m+1} 3 \frac{\partial^2 a}{\partial x^2}+\binom {m+1}2 \frac{\partial \tilde b_F}{\partial x}+(m+1)\tilde c_F+\tilde e\right)
\Bigg|_{x=\xi_l}=0.
\end{multline}
We view this as a recursion for obtaining the unknown coefficient $e$ from $\tilde e$. By another direct computation, it is consistent with the explicit expression \eqref{ee}. This proves the following induction step.

\begin{lemma}\label{url}
If \emph{Theorem \ref{pdet}} holds for fixed   $\mathbf k$ and $m\geq 1$, then it also holds when $\mathbf k$ is replaced by $\mathbf k+\mathbf e_l$ and $m$ by $m-1$.
\end{lemma}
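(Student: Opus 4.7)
My plan is to use the recursion \eqref{eer}, which was derived in the preceding discussion without any hypothesis on $\mathbf{k}$ and so is valid for all parameters. Let $(\mathbf{k}', m') = (\mathbf{k} + \mathbf{e}_l, m-1)$. Then the lemma's hypothesis is exactly that Theorem \ref{pdet} holds at $(\mathbf{k}' - \mathbf{e}_l, m' + 1) = (\mathbf{k}, m)$, and the conclusion is that it holds at $(\mathbf{k}', m')$. Substituting $(\mathbf{k}, m) \mapsto (\mathbf{k}', m')$ into \eqref{eer} produces a single linear equation for the unknown constant $e = e^{(\mathbf{k}', m')}$ in terms of the inductively known $\tilde{e} = e^{(\mathbf{k}, m)}$ together with the purely explicit data $a$, $b_F$, $c_F$, $d$ and $\partial \xi_l/\partial \zeta$. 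The coefficient of $e$ in that equation is $m' + 1 = m \geq 1$, so $e^{(\mathbf{k}', m')}$ is uniquely determined.

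It therefore suffices to verify that the expression in \eqref{ee} at parameters $(\mathbf{k}', m')$ also satisfies the re-indexed \eqref{eer}, i.e.\ that the explicit formula for $e$ is a particular solution of the recursion. Unpacking $b_F$ and $c_F$ via their definitions preceding \eqref{usf}, the statement collapses to a polynomial identity in $\zeta$. The only $\mathbf{k}$-dependent pieces are the ``Darboux-like'' summand $W$ of $c_F$ from \eqref{wp} and the logarithmic $\zeta$-derivative of $f$; under $k_l \mapsto k_l + 1$ each changes in a controlled way (linearly in $k_l$ for $W$, and by an explicit linear combination of $\log$-derivatives of the elementary factors of $f$), so the identity reduces to matching a short explicit list of rational functions of $\zeta$ evaluated at $x = \xi_l$.

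The main obstacle is the bookkeeping. Each index $l \in \{0,1,2,3\}$ requires its own check, with $\xi_l$ and $\partial\xi_l/\partial \zeta$ read off from \eqref{xv}. I would first dispose of $l = 3$, where $\xi_3 = 1$ is $\zeta$-independent so that $\partial_\zeta \xi_3 = 0$ and several terms collapse, and then grind through $l = 0, 1, 2$ analogously. No new conceptual ingredient is required; the argument is precisely the recursive one anticipated by the remark that \eqref{eer} is ``consistent with the explicit expression \eqref{ee}''.
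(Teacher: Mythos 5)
Your proposal is correct and follows essentially the same route as the paper: treat \eqref{eer} (which is established before the lemma, for arbitrary $\mathbf k$, with the true but possibly unknown constants) as a linear recursion whose coefficient of the unknown $e$ is nonzero for $m\geq 1$, and then verify by direct computation that the explicit formula \eqref{ee} satisfies it. The paper likewise leaves that final verification as a ``direct computation,'' so your level of detail matches.
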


We will need another recursion, which follows from
the following differential equation for the polynomials $U_n^{(\mathbf k)}$ defined in \eqref{unk}.

\begin{proposition}\label{sup}
The polynomials  $U_n^{(\mathbf k)}$  satisfy the differential equation
\begin{equation}\label{usu}
\Om\prod_{j=1}^mK(x_j)\Delta(x_1,\dots,x_m) U_n^{(\mathbf k)}(x_1,\dots,x_m)=0,
\end{equation}
where $\Om$ is as in \eqref{rom}, and $K=1/F\sqrt a$. 
\end{proposition}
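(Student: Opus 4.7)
My plan is to derive \eqref{usu} by mirroring the proof of Theorem \ref{pdet}, but using a different algebraic representation of the generator of the one-dimensional space $(\Theta_n^{\mathbf k})^{\wedge m}$. The key relation is \eqref{stu}, which links $\Delta T_n^{(\mathbf k)}$ and $\Delta U_n^{(\mathbf k)}$ via $\hat\sigma^{\otimes m}$. On the analytic side, applying $\sigma^{\otimes m}$ in the variables $z_1,\dots,z_m$ to the $T$-form \eqref{eps}, and using the conjugation \eqref{us} to convert each $\sigma$ into $\hat\sigma$ (which introduces a factor $1/\sqrt{a(x_j)}$ per variable), should yield the alternative $U$-form of the same generator, up to a factor independent of $z_1,\dots,z_m$:
$$\Psi \;\sim\; \Phi\prod_{j=1}^m\psi(z_j)^m E(x_j) K(x_j)\,\Delta(\mathbf x)\,U_n^{(\mathbf k)}(\mathbf x),$$
with $K=1/(F\sqrt a)$. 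The replacement of $F$ by $1/F$ reflects the swap of the factors $(x-\xi)^{k^\pm}$ and $G(x,\xi)^{k^\pm}$ on the two sides of \eqref{stu}, while the extra $1/\sqrt a$ comes from the $m$ applications of \eqref{us}.

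The next step is to observe that $\sigma$ commutes with the Schr\"odinger operator $\mathcal H$ of \eqref{om}. Indeed, $\sigma$ trivially commutes with $\partial/\partial\tau$ and $\partial^2/\partial z_j^2$, while it commutes with multiplication by $V(z_j)$ because $V$ is $1/3$-periodic in each variable. In view of the quasi-periodicity \eqref{php} of $\Phi$, this implies that the $U$-form of $\Psi$ satisfies the Schr\"odinger equation \eqref{sem} with the same eigenvalue $C$ as the $T$-form. I would then reprise the algebraic reduction of Section \ref{pds}, substituting the $U$-form into \eqref{sem} and applying Lemmas \ref{ddl}, \ref{xdl}, \ref{zdl}, \ref{pdl} and \ref{pfl}. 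Because those lemmas depend only on the change of variables $(z,\tau)\mapsto(x,\zeta)$ and not on the particular gauge factor ($F$ versus $K$), exactly the same operator $\Omega$ of \eqref{rom} arises, giving \eqref{usu} with the same constant term $e$ as in Theorem \ref{pdet}.

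The main obstacle is the careful bookkeeping in the first step. Verifying that the gauge factor on the $U$-side is precisely $K=1/(F\sqrt a)$, with the correct overall $\tau$-dependent constant and sign, requires combining \eqref{ukl}, \eqref{eps}, \eqref{us} and \eqref{stu}. This is where the argument differs structurally from the $T$-case, since the $m$-fold accumulation of $\sqrt a$-factors must be tracked across the antisymmetrization and through the multi-variable version of the conjugation \eqref{us}. Once the gauge is pinned down, the rest of the derivation is parallel to that of Theorem \ref{pdet}, with no further modifications needed.
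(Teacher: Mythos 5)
Your outline follows essentially the same route as the paper: the proof of Proposition \ref{sup} there also rests on the commutation of $\sigma^{\otimes m}$ with the ($\Phi$-conjugated) Schr\"odinger operator, the conjugation \eqref{us}, and the relation \eqref{stu}, with $K=1/F\sqrt a$ emerging from exactly the swap of prefactors you describe. Two cautions, both concerning the step you flag as the main obstacle. First, $\sigma^{\otimes m}\Psi$ is \emph{not} ``the $U$-form of the same generator'': it is a different function (even rather than odd in each $z_j$, so not in $(\Theta_n^{\mathbf k})^{\wedge m}$); your argument survives because all you need is that it satisfies the same eigenvalue equation, but the relation should be written $\sigma^{\otimes m}\Psi\sim[\text{$U$-form}]$, not $\Psi\sim[\text{$U$-form}]$. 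Second, and more substantively, the assertion that ``exactly the same constant term $e$ arises'' does not follow merely from the reduction lemmas being gauge-independent: any $\tau$-dependent proportionality factor between $\sigma^{\otimes m}\Psi$ and the normalized $U$-form shifts $e$ by $md\,g'/g$, and such factors do occur (e.g.\ \eqref{eps} agrees with \eqref{ukl} only up to a $\tau$-dependent constant, and \eqref{us} must be applied with $n$ replaced by $n+|\mathbf k^-|$, which forces a regauging by $\Phi^{(2\mathbf k^-)}$). The paper avoids having to compute any such factor by recasting the whole argument as an exact operator intertwining relation \eqref{sic}: the gauge prefactors $H^{(\mathbf k^++\mathbf k^-)}M_{n+|\mathbf k^-|}$ cancel identically on both sides, the scalar in \eqref{stu} is a pure number, and so the \emph{same} $\Omega$ (including $e$) appears on both sides with no normalization left to track. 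If you carry out your function-level version, that scalar bookkeeping is precisely where the content of the proposition lies, so it cannot be deferred.
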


The main point of Proposition \ref{sup} is that 
\eqref{usu} holds with  the same $e$ 
as in \eqref{usm}. It is easier to see that
 it holds for {some}  $e$ or, equivalently, for some function $K\sim 1/F\sqrt a$ up to a $\zeta$-dependent factor.

\begin{proof}
The operator $\Omega$ has been constructed so that
$$H^{(\mathbf k)}M_n\Omega (H^{(\mathbf k)}M_n)^{-1}=\Phi^{(\mathbf k)}\mathcal H\Phi^{(-\mathbf k)}+C, $$
where 
$$H^{(\mathbf k)}=\prod_{j=1}^m\prod_{l=0}^3\big((x_j-\xi_l)G(x_j,\xi_l)\big)^{\frac {k_l}2},  $$
$M_n$ is given in \eqref{m},  
$\mathcal H$  in \eqref{om} and  $C$ is independent of the variables $z_j$.
We conjugate this identity with $\Phi^{(2\mathbf k^-)}$. Using that
$${\Phi^{(2\mathbf k^-)}H^{(\mathbf k)}M_n}\sim H ^{(\mathbf k^++\mathbf k^-)}{M_{n+|\mathbf k^-|}} $$
up to a factor independent of the variables $z_j$, we find that
(with a change of $C$)
\begin{equation}\label{pso}H^{(\mathbf k^++\mathbf k^-)}M_{n+|\mathbf k^-|}\Omega (H^{(\mathbf k^++\mathbf k^-)}M_{n+|\mathbf k^-|})^{-1}
=\Phi^{(\mathbf k^++\mathbf k^-)}\mathcal H\Phi^{(-\mathbf k^+-\mathbf k^-)}+C. 
\end{equation}

It follows from \eqref{php} that $\sigma^{\otimes m}$ commutes with the right-hand side of \eqref{pso}. If $L$ denote the left-hand side of \eqref{pso}, we 
apply \eqref{us}, with $n$
replaced by $n+|\mathbf k^-|$, to the left-hand side of 
 $L\sigma^{\otimes m }=\sigma^{\otimes m } L$. 
This gives
\begin{multline}\label{sic}H^{(\mathbf k^++\mathbf k^-)}M_{n+|\mathbf k^-|}\Omega H^{(-\mathbf k^+-\mathbf k^-)}(\sqrt a^{-1})^{\otimes m}\hat\sigma_{n+|\mathbf k^-|}^{\otimes m}\\
=\sigma^{\otimes m} H^{(\mathbf k^++\mathbf k^-)}M_{n+|\mathbf k^-|}\Omega H^{(-\mathbf k^+-\mathbf k^-)}, \end{multline}
which holds on the domain of $\hat\sigma_{n+|\mathbf k^-|}^{\otimes m}$.
If we  act with \eqref{sic} on
$$H^{(\mathbf k^++\mathbf k^-)}F^{\otimes m}\Delta T_n^{(\mathbf k)},$$
 the right-hand side vanishes by \eqref{usm}. 
We can then deduce \eqref{usu} from
\eqref{stu}.
\end{proof}

We can now repeat the analysis leading to  \eqref{eer}, using \eqref{uen} and \eqref{usu} rather than \eqref{ten} and \eqref{usm}. We 
find that 
 \begin{multline*}
(m+1)\left(\binom m 3 \frac{\partial^2 a}{\partial x^2}+\binom m2 \frac{\partial b_K}{\partial x}+mc_K+e\right)\\
-m\left(\binom {m+1} 3 \frac{\partial^2 a}{\partial x^2}+\binom {m+1}2 \frac{\partial \tilde b_K}{\partial x}+(m+1)\tilde c_K+\tilde e\right)
\Bigg|_{x=\xi_l}=0.
\end{multline*}
where $\sim$ now denotes  the change of indices 
$(\mathbf k,m)\mapsto (\mathbf k+\mathbf e_l,m+1)$.
Again, this is consistent with \eqref{ee}, which proves the following lemma.

\begin{lemma}\label{drl}
If \emph{Theorem \ref{pdet}} holds for fixed   $\mathbf k$ and $m\geq 1$, it also holds when $\mathbf k$ is replaced by $\mathbf k-\mathbf e_l$ and $m$ by $m-1$.
\end{lemma}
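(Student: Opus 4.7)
The plan is to mirror the argument that produced Lemma \ref{url}, replacing $T_n^{(\mathbf k)}$ throughout by the dual polynomial $U_n^{(\mathbf k)}$. The crucial enabling fact is Proposition \ref{sup}, which asserts that $U_n^{(\mathbf k)}$ satisfies a differential equation of the same form as \eqref{usm}, \emph{with the same constant term} $e$; this is what allows information about $e$ to be propagated across the recursion even though we now specialize to a different substitution pattern.

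Concretely, I would first apply Lemma \ref{sdl} to \eqref{usu} after letting all variables tend to $\xi_l$, mirroring the derivation of \eqref{usl} from \eqref{usm}. Since $K(x)=1/(F(x)\sqrt{a(x)})$ and $a(\xi_l)=0$, the computation is structurally the same as for $T$, yielding a relation of the shape \eqref{usl} in which $\alpha,\beta,\varepsilon$ now denote $U_n^{(\mathbf k)}(\xi_l^{(m)})$, its first derivative in $x_1$, and its derivative in $\zeta$ at that point, and in which $b_F,c_F$ are replaced by their $K$-analogues $b_K,c_K$. Next, the identity \eqref{uen} specialized to $\mathbf l=\mathbf e_l$ reads $U_n^{(\mathbf k)}(\mathbf x,\xi_l)=U_{n-1}^{(\mathbf k-\mathbf e_l)}(\mathbf x)$; using this in place of \eqref{ten} converts the second ``tilded'' limit equation into the analog of \eqref{uslb} under the substitution $(\mathbf k,m)\mapsto(\mathbf k+\mathbf e_l,m+1)$, which is the direction advertised in the paragraph preceding the lemma.

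Elimination of $\varepsilon$ between the two limit equations then produces a recursion of the same form as \eqref{eer}, now relating $e(\mathbf k,m)$ to $\tilde e=e(\mathbf k+\mathbf e_l,m+1)$. This step is legitimate provided the common coefficient $\alpha=U_n^{(\mathbf k)}(\xi_l^{(m)})$ is not identically zero in $\zeta$; I would verify this either by iterating \eqref{uen} to rewrite $\alpha$ as a $U$-polynomial in zero variables and then invoking the duality with $T$-polynomials recorded at the end of Section \ref{ps}, or by directly applying \cite[Cor.\ 3.9]{r1a} after passing through that duality. With the recursion established, one finally checks by direct polynomial computation in $\zeta$ that the explicit expression \eqref{ee} for $e$ satisfies it. Reindexing $(\mathbf k,m)\leftrightarrow(\mathbf k-\mathbf e_l,m-1)$ then yields the statement of the lemma.

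The main obstacle I anticipate is ensuring that Lemma \ref{sdl} may legitimately be applied to $K(x_j)\Delta\, U_n^{(\mathbf k)}$ at the point $\xi_l$: although $K$, $b_K$, $c_K$ and $U_n^{(\mathbf k)}$ all carry apparent singularities coming from the factors $G(x,\xi_i)$ and $(x-\xi_i)$, the product entering \eqref{usu} is a genuine formal power series around $x_j=\xi_l$ for each $l\in\{0,1,2,3\}$. This regularity is what makes the limit calculation meaningful, and once it is confirmed the rest of the argument is mechanical, following the proof of Lemma \ref{url} almost verbatim.
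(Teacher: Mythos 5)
Your proposal takes essentially the same route as the paper: the paper proves Lemma \ref{drl} by repeating the analysis leading to \eqref{eer} with \eqref{uen} and \eqref{usu} in place of \eqref{ten} and \eqref{usm}, obtaining a recursion for $e$ under $(\mathbf k,m)\mapsto(\mathbf k+\mathbf e_l,m+1)$ and verifying it against \eqref{ee}. The regularity at $\xi_l$ and the non-vanishing of $U_n^{(\mathbf k)}(\xi_l^{(m)})$ that you flag are left implicit in the paper but are correctly handled by the arguments you sketch.
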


It is clear that Lemmas \ref{zjl}, \ref{url} and \ref{drl} together imply 
Theorem \ref{pdet} by induction on $\sum_j|k_j|$.

\section{Bilinear identities}
\label{bis}

The case $m=2n-|\mathbf k|=0$, when $T_n^{(\mathbf k)}$ depends only on $\zeta$, is of particular interest. 
We will write
$t^{(\mathbf k)}=T_{|\mathbf k|/2}^{(\mathbf k)}$.  
 In the subsequent paper \cite{r2}, we will show that $t^{(\mathbf k)}$ can be identified with tau functions of Painlev\'e VI. 
We will now explain how bilinear identities for tau functions arise from our construction. Rather than giving a complete list, we will 
just give two examples of such relations, which will in fact be used in \cite{r2} to
obtain the identification with tau functions. 
Once this idenfication has been established, one can obtain further
bilinear identities from Painlev\'e theory; some examples
 are discussed in \cite{r2}.

The point of the following result is to 
characterize $t^{(\mathbf k)}$ by a very short list of properties.
 Recall from
\cite[Cor.\ 2.19 and Cor.\ 2.21]{r1a} that the lattice of functions 
$t^{(\mathbf k)}$ is symmetric under the group $G=\mathrm S_4\times \mathrm S_2$ in the following sense. If  $\mathrm S_4$ acts by permuting $(k_0,k_1,k_2,k_3)$  and $\mathrm S_2$ by the reflection
 $(k_0,k_1,k_2,k_3)\mapsto(-k_0-1,-k_1-1,-k_2-1,-k_3-1)$, then for any $\sigma\in  G$ there holds an identity
\begin{equation}\label{tsy}t^{(\mathbf k)}(\zeta)=\phi(\zeta)t^{(\sigma \mathbf k)}(\psi(\zeta)),\end{equation}
with $\phi=\phi_{\mathbf k,\sigma}$ and $\psi=\psi_\sigma$  rational functions that can be given explicitly.

\begin{theorem}\label{rt}
The functions $t^{(\mathbf k)}$ satisfy the two identities
\begin{subequations}\label{km}
\begin{multline}\label{kma}
t^{(\mathbf k-2\mathbf e_0)}t^{(\mathbf k+\mathbf e_0+\mathbf e_1)}=\zeta^2(\zeta+1)(\zeta-1)(2\zeta+1)^2\\
\times\left(\frac{1}{2k_0-1}\,t^{(\mathbf k)}\frac{d t^{(\mathbf k-\mathbf e_0+\mathbf e_1)}}{d \zeta}-\frac{1}{2k_0+1}\frac{d t^{(\mathbf k)}}{d \zeta} t^{(\mathbf k-\mathbf e_0+\mathbf e_1)}\right)\\
+\frac{\zeta(2\zeta+1)}{2(2k_0-1)(2k_0+1)(\zeta+2)}A^{(\mathbf k)}t^{(\mathbf k)}t^{(\mathbf k-\mathbf e_0+\mathbf e_1)},
\end{multline}
\begin{multline}\label{kmb}
t^{(\mathbf k-2\mathbf e_0)}t^{(\mathbf k+\mathbf e_0-\mathbf e_1)}=\frac{(\zeta+1)(\zeta-1)(2\zeta+1)^2(\zeta+2)^2}{\zeta^2}\\
\times\left(\frac{1}{2k_0-1}\,t^{(\mathbf k)}\frac{d t^{(\mathbf k-\mathbf e_0-\mathbf e_1)}}{d \zeta}-\frac{1}{2k_0+1}\frac{d t^{(\mathbf k)}}{d \zeta} t^{(\mathbf k-\mathbf e_0-\mathbf e_1)}\right)\\
+\frac{(2\zeta+1)(\zeta+2)}{2(2k_0-1)(2k_0+1)\zeta^3}B^{(\mathbf k)}t^{(\mathbf k)}t^{(\mathbf k-\mathbf e_0-\mathbf e_1)},
\end{multline}
\end{subequations}
where
{\allowdisplaybreaks
\begin{align*}
A^{(\mathbf k)}&=(2\zeta^4-23\zeta^3-36\zeta^2-5\zeta+8)k_0^2\\
&\quad-\zeta(2\zeta+1)(3\zeta^2+10\zeta+5)k_1(2k_0+k_1)\\
&\quad-\zeta(6\zeta^3+19\zeta^2+4\zeta-11)k_2^2\\
&\quad-\zeta(2\zeta+1)(3\zeta^2+2\zeta+1)k_3^2\\
&\quad-2\zeta(\zeta-1)(2\zeta+1)(\zeta+3)(k_0+k_1)k_2\\
&\quad-2(\zeta-1)(2\zeta+1)(3\zeta^2+9\zeta+4)(k_0+k_1)k_3\\
&\quad-2(2\zeta+1)(\zeta^3+6\zeta^2+3\zeta-4)k_2k_3\\
&\quad-4(2\zeta+1)(\zeta^2+5\zeta+3)(k_0+k_1)\\
&\quad+4(2\zeta+1)(2\zeta^3+5\zeta^2-\zeta-3)k_2\\
&\quad+4(2\zeta+1)(\zeta^2+\zeta+1)k_3\\
&\quad-4(\zeta+1)^2(2\zeta^2-\zeta+2),\\
B^{(\mathbf k)}&=(10\zeta^4+13\zeta^3-28\zeta^2-41\zeta-8)k_0^2\\
&\quad-\zeta(2\zeta+1)(3\zeta^2+10\zeta+5)k_1(k_1-2k_0)\\
&\quad-\zeta(6\zeta^3+19\zeta^2+4\zeta-11)k_2^2\\
&\quad-\zeta(2\zeta+1)(3\zeta^2+2\zeta+1)k_3^2\\
&\quad+2\zeta(\zeta-1)(2\zeta+1)(\zeta+3)(k_0-k_1)k_2\\
&\quad+2(\zeta-1)(2\zeta+1)(3\zeta^2+9\zeta+4)(k_0-k_1)k_3\\
&\quad-2(2\zeta+1)(\zeta^3+6\zeta^2+3\zeta-4)k_2k_3\\
&\quad+2(\zeta-1)(2\zeta+1)(\zeta+3)(3\zeta+2)(k_1-k_0)\\
&\quad+2(2\zeta+1)(5\zeta^3+12\zeta^2-5\zeta-6)k_2\\
&\quad+2(2\zeta+1)(3\zeta^3+8\zeta^2-3\zeta-2)k_3\\
&\quad-2(8\zeta^4+18\zeta^3-7\zeta^2-18\zeta-4).
\end{align*}
}
Moreover, the lattice of functions $t^{(\mathbf k)}$, where $\mathbf k=(k_0,k_1,k_2,k_3)\in\mathbb Z^4$ with $\sum_j k_j$ even, is uniquely determined by \eqref{tsy}, \eqref{km}
and the three  values 
\begin{align*}
t^{(0,0,0,0)}&=t^{(1,-1,0,0)}=1,\\
t^{(0,-1,-1,0)}&=-\frac{2\zeta^2(\zeta-1)(\zeta+1)^2(2\zeta+1)}{(\zeta+2)^2}.
\end{align*}
\end{theorem}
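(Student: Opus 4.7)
\medskip

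\noindent\textbf{Proof plan for Theorem \ref{rt}.}
The plan is to derive \eqref{kma} and \eqref{kmb} from the one–variable case ($m=1$) of Theorem \ref{pdet} combined with the minor identities that relate the block–determinant representatives of $T_n^{(\mathbf k)}$ and $U_n^{(\mathbf k)}$. The idea is standard for tau–function calculus: a Pl\"ucker/Desnanot--Jacobi relation among determinants produces a bilinear identity of Wronskian type containing an $x$-derivative; the Schr\"odinger equation is then used to convert $x$-derivatives at the half-period points into $\zeta$-derivatives; finally, the stabilization properties \eqref{ten} and \eqref{uen} collapse several–variable $T$ and $U$ polynomials to $t^{(\mathbf k')}$.

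Concretely, for \eqref{kma} I would take $\mathbf k' = \mathbf k - \mathbf e_0$ with $n = |\mathbf k|/2$, so that $m = 2n - |\mathbf k'| = 1$ and Theorem \ref{pdet} supplies an ODE of the form
$$
\left(a(x)\partial_x^2 + b(x)\partial_x + c(x) + d\,\partial_\zeta + \tfrac{1}{m}e\right)
\bigl(f\, F(x)\, T_n^{(\mathbf k')}(x)\bigr)=0.
$$
By \eqref{ten} one has $T_n^{(\mathbf k')}(\xi_0)=t^{(\mathbf k)}$ and $T_n^{(\mathbf k')}(\xi_1)=t^{(\mathbf k-\mathbf e_0+\mathbf e_1)}$. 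To introduce the remaining two factors $t^{(\mathbf k-2\mathbf e_0)}$ and $t^{(\mathbf k+\mathbf e_0+\mathbf e_1)}$, I would use the companion $m=1$ relation \eqref{stu}, evaluated at $x=\xi_0$: this expresses the combination $T_n^{(\mathbf k')}(\xi_0)$ and $(\partial_x T_n^{(\mathbf k')})(\xi_0)$ bilinearly in terms of $U_n^{(\mathbf k')}(\xi_0)$, and by \eqref{uen} the latter reduces to $t^{(\mathbf k - 2\mathbf e_0)}$ up to an explicit $\zeta$-factor. The same construction with an additional application of \eqref{ten} produces $t^{(\mathbf k+\mathbf e_0+\mathbf e_1)}$. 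Eliminating $\partial_x^2 T_n^{(\mathbf k')}|_{\xi_0}$ between the ODE and the $\sigma$/minor relation and trading $\partial_x T_n^{(\mathbf k')}$ for a Wronskian in $\zeta$ via the ODE yields a bilinear relation of exactly the form \eqref{kma}. Identity \eqref{kmb} is obtained by the same procedure with the roles of the two specialization points interchanged (or equivalently, with $(k_0,k_1)\mapsto(k_0,-k_1-1)$ symmetry from \eqref{tsy}).

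For the uniqueness claim, observe that \eqref{kma} is linear in $t^{(\mathbf k+\mathbf e_0+\mathbf e_1)}$ and determines it algebraically from $t^{(\mathbf k-2\mathbf e_0)}$, $t^{(\mathbf k)}$, $t^{(\mathbf k-\mathbf e_0+\mathbf e_1)}$ and $dt^{(\mathbf k)}/d\zeta$, $dt^{(\mathbf k-\mathbf e_0+\mathbf e_1)}/d\zeta$; likewise \eqref{kmb} determines $t^{(\mathbf k+\mathbf e_0-\mathbf e_1)}$. Combined with the $G=\mathrm S_4\times\mathrm S_2$ action \eqref{tsy}, which lets one permute the roles of the four indices and flip the overall sign of $\mathbf k$, these two recursions propagate outward from any finite seed. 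Starting from the three initial values listed and using induction on an appropriate norm on $\mathbb Z^4$ (e.g.\ $\max_j|k_j|$) reduced modulo the $G$-action shows that the full lattice is determined. Consistency of the recursion over the different ways of reaching a given $\mathbf k$ is automatic, since the $t^{(\mathbf k)}$ are already known to exist as well-defined functions of $\zeta$.

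The main obstacle is the algebraic bookkeeping in step two: extracting the explicit polynomials $A^{(\mathbf k)}$ and $B^{(\mathbf k)}$ requires the Laurent expansions of $a$, $b$, $c$ near $x=\xi_0$, the explicit factors produced by the gauges $F(x)$, $f$, $M_n$ and by the $\hat\sigma$-transform between $T$ and $U$, and a careful handling of the $k_l$-dependent prefactors that enter via \eqref{oca}. The dependence of $A^{(\mathbf k)}$, $B^{(\mathbf k)}$ on $\mathbf k$ is at most quadratic, as visible in \eqref{wp}, so once the structural form of the identity is in place, matching coefficients is a finite check; it is however the point at which most of the effort is concentrated.
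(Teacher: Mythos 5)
Your overall strategy --- a Desnanot--Jacobi minor identity producing a bilinear relation that contains an $x$-derivative, followed by the $m=1$ case of the differential equation to trade that derivative for $d/d\zeta$ acting on $t$-functions --- is indeed the paper's strategy, and your outline of the uniqueness argument is also in the right spirit. However, two of your concrete steps do not work as described. First, the mechanism you propose for introducing the factors $t^{(\mathbf k-2\mathbf e_0)}$ and $t^{(\mathbf k+\mathbf e_0+\mathbf e_1)}$, namely evaluating \eqref{stu} at $x=\xi_0$, cannot produce them: \eqref{stu} is a \emph{linear} relation (one $T$ on the left, one $U$ on the right), so it cannot express a combination of $T_n^{(\mathbf k')}(\xi_0)$ and $\partial_x T_n^{(\mathbf k')}(\xi_0)$ \emph{bilinearly} in terms of $U_n^{(\mathbf k')}(\xi_0)$; the $U$-functions play no role in this theorem. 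In the paper all four $t$-factors come directly from the Jacobi--Desnanot identity \cite[Eq.\ (2.41a)]{r1a} for $T(a,b,c,d,\mathbf x;\mathbf y)$ under the specialization $b=c=\xi_0$, $d=\xi_1$, $\mathbf x=\boldsymbol\xi^{\mathbf k^+}$, $\mathbf y=\boldsymbol\xi^{\mathbf k^-}$; since the left-hand side carries the explicit factor $(a-\xi_0)$, one differentiates in $a$ before setting $a=\xi_0$, and it is this differentiation that produces both the shifts by $2\mathbf e_0$ versus $3\mathbf e_0$ and the derivatives $\partial_x T_{n+1}^{(\cdot)}(\xi_0)$ appearing in \eqref{tpr}. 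Second, converting $\partial_x T_n^{(\mathbf k)}(\xi_0)$ into $dt^{(\mathbf k+\mathbf e_0)}/d\zeta$ requires more than the ODE at $x=\xi_0$: since $\xi_0=2\zeta+1$ moves with $\zeta$, one must also differentiate the identity $T_n^{(\mathbf k)}(\xi_0)=t^{(\mathbf k+\mathbf e_0)}$ by the chain rule and eliminate $\partial_\zeta T_n^{(\mathbf k)}(\xi_0)$ between the two relations (this is \eqref{bst}); your sketch omits this step. (No elimination of $\partial_x^2$ is needed at all, because $a(\xi_0)=0$.)

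The uniqueness argument also has a gap. Solving \eqref{kma} or \eqref{kmb} for the new $t$-function requires dividing by $t^{(\mathbf k-2\mathbf e_0)}$, so you need the non-vanishing result \cite[Cor.\ 3.9]{r1a}, which you do not invoke. More importantly, induction on $\max_j|k_j|$ does not obviously terminate, since a single recursion step raises some coordinates while lowering others. The paper inducts on $N(\mathbf k)=\sum_j|k_j+1/2|$, which is invariant under the $G$-action, chooses the orbit representative with $k_0+1/2\geq\max_{1\leq j\leq 3}|k_j+1/2|$, and then checks by explicit inequalities exactly when every other term in \eqref{km} has strictly smaller $N$. This fails precisely for $k_0\in\{0,1\}$ and $k_1,k_2,k_3\in\{0,-1\}$, giving four $G$-orbits: three are the prescribed initial values and the fourth is reached by one application of \eqref{kmb}. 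Without this bookkeeping you cannot justify that exactly those three seeds suffice, which is part of the assertion being proved.
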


\begin{proof}
We start from  the Jacobi--Desnanot identity in the form
\cite[Eq.\ (2.41a)]{r1a}
\begin{multline*}(a-b)(c-d)T(\mathbf x;\mathbf y)T(a,b,c,d,\mathbf x;\mathbf y)
=G(a,d)G(b,c)T(a,c,\mathbf x;\mathbf y)T(b,d,\mathbf x;\mathbf y)\\
-G(a,c)G(b,d)T(a,d,\mathbf x;\mathbf y)T(b,c,\mathbf x;\mathbf y).
\end{multline*}
When
$b=c=\xi_0$, $d=\xi_1$, $\mathbf x=\boldsymbol\xi^{(\mathbf k^+)}$ and $y=\boldsymbol\xi^{(\mathbf k^-)}$,
this can be written
\begin{multline*}(a-\xi_0)(\xi_0-\xi_1)t^{(\mathbf k)}T_{n+2}^{(\mathbf k+2\mathbf e_0+\mathbf e_1)}(a)
\\
=G(a,\xi_1)G(\xi_0,\xi_0)T_{n+1}^{(\mathbf k+\mathbf e_0)}(a)t^{(\mathbf k+\mathbf e_0+\mathbf e_1)}-
G(a,\xi_0)G(\xi_0,\xi_1)
t^{(\mathbf k+2\mathbf e_0)}T_{n+1}^{(\mathbf k+\mathbf e_1)}(a),
\end{multline*}
where $|\mathbf k|=2n$.
Differentiating  with respect to $a$ and  letting  $a=\xi_0$ gives
\begin{multline}\label{tpr}(\xi_0-\xi_1)t^{(\mathbf k)}t^{(\mathbf k+3\mathbf e_0+\mathbf e_1)}
\\
=\left(G(\xi_0,\xi_0)\frac{\partial G}{\partial x}(\xi_0,\xi_1)-
\frac{\partial G}{\partial x}(\xi_0,\xi_0)G(\xi_0,\xi_1)\right)t^{(\mathbf k+2\mathbf e_0)}t^{(\mathbf k+\mathbf e_0+\mathbf e_1)}\\
+G(\xi_0,\xi_0)G(\xi_0,\xi_1)\left(\frac{\partial T_{n+1}^{(\mathbf k+\mathbf e_0)}}{\partial x}(\xi_0)t^{(\mathbf k+\mathbf e_0+\mathbf e_1)}-t^{(\mathbf k+2\mathbf e_0)}\frac{\partial T_{n+1}^{(\mathbf k+\mathbf e_1)}}{\partial x}(\xi_0)\right).
\end{multline}

The main point is now that the specialized derivatives of $T$-functions in \eqref{tpr} can be expressed  in terms of $t$-functions using the Schr\"odinger equation.
The relevant identity is a special case of \eqref{uslb}, but for clarity we repeat the argument. If we let $x_1\rightarrow\xi_0$
in the case $m=1$ of \eqref{usf}, we get
$$b_F(\xi_0)\frac{\partial T_n^{(\mathbf k)}}{\partial x}(\xi_0)+ (c_F(\xi_0)+e)T_n^{(\mathbf k)}(\xi_0)+d\frac{\partial T_n^{(\mathbf k)}}{\partial \zeta}(\xi_0)=0. $$
 On the other hand, differentiating the equality $T_n^{(\mathbf k)}(\xi_0)=t^{(\mathbf k+\mathbf e_0)}$ gives
$$2\frac{\partial T_n^{(\mathbf k)}}{\partial x}(\xi_0)+\frac{\partial T_n^{(\mathbf k)}}{\partial \zeta}(\xi_0)=\frac{d t^{(\mathbf k+\mathbf e_0)}}{d \zeta}. $$
Eliminating $\partial T_n^{(\mathbf k)}/\partial\zeta$ from these two equations, we find that
\begin{equation}\label{bst}\frac{\partial T_n^{(\mathbf k)}}{\partial x}(\xi_0)=\frac 1{2d-b_F(\xi_0)}\left((c_F(\xi_0)+e)t^{(\mathbf k+\mathbf e_0)}+d\frac{d t^{(\mathbf k+e_0)}}{d \zeta}\right). \end{equation}
Using \eqref{bst} on the right-hand side of \eqref{tpr}
 gives, after replacing $k_0$ by $k_0-2$, \eqref{kma}.
The identity \eqref{kmb} is proved similarly, starting instead from
 \cite[Eq.\ (2.41b)]{r1a}
\begin{multline*}(a-b)T(\mathbf x;\mathbf y)T(a,b,c,\mathbf x;d,\mathbf y)
=(a-d)G(a,d)G(b,c)T(a,c,\mathbf x;\mathbf y)T(b,\mathbf x;d,\mathbf y)\\
-(b-d)G(a,c)G(b,d)T(a,\mathbf x;d,\mathbf y)T(b,c,\mathbf x;\mathbf y).
\end{multline*}

To show that $t^{(\mathbf k)}$ can be constructed from the given data,
we apply induction on 
$N(\mathbf k)=\sum_{j=0}^3|k_j+1/2|$. Thus, fixing $\mathbf k$, suppose that $t^{(\mathbf l)}$ is known for all $\mathbf l$ with $N(\mathbf l)<N(\mathbf k)$. 
By the symmetries \eqref{tsy} and the fact that
 $N(\mathbf k)$ is invariant under the group action, 
we  may replace $\mathbf k$ by any element in the same orbit.
We choose this element so that
$k_0+1/2\geq\max_{1\leq j\leq 3}|k_j+1/2|$. In particular, $k_0\geq 0$.
If $k_1\geq 0$, we  use \eqref{kma}, with $\mathbf k$ replaced by $\mathbf k-\mathbf e_0-\mathbf e_1$, to define $t^{(\mathbf k)}$. If $k_1\leq -1$, we  use instead
\eqref{kmb}, with $\mathbf k$ replaced by $\mathbf k-\mathbf e_0+\mathbf e_1$.
This does not lead to division by zero, since  $t^{(\mathbf k)}$ never vanishes identically
\cite[Cor.\ 3.9]{r1a}.

For the construction just described to work,  the remaining functions $t^{(\mathbf l)}$ appearing in \eqref{km} must satisfy $N(\mathbf l)<N(\mathbf k)$. Thus, in the case $k_1\geq 0$ we must have
\begin{align*}
\left|k_0-\frac 52\right|+\left|k_1-\frac 12\right|&< k_0+k_1+1,\\
\left|k_0-\frac 32\right|&< k_0+\frac 12,\\
\left|k_0-\frac 12\right|+\left|k_1-\frac 12\right|&< k_0+k_1+1,
\end{align*}
It is an elementary exercise to check that this is true except in the cases
$(k_0,k_1)=(0,0)$ and $(k_0,k_1)=(1,0)$. Similarly, when $k_1\leq 0$ the construction works except if $(k_0,k_1)=(0,-1)$ or $(k_0,k_1)=(1,-1)$. 
In conclusion, the induction step works unless 
$k_0\in\{0,1\}$, $k_1\in\{0,-1\}$. Repeating the same construction  with 
$k_1$ replaced by $k_2$ or $k_3$ we are left with the exceptional cases
 $k_0\in\{0,1\}$, $k_1,k_2,k_3\in\{0,-1\}$. 
Since $|\mathbf k|$ is even there are eight such cases, which
  split into four $G$-orbits represented by
$\mathbf k=(0,0,0,0)$,  $(0,-1,-1,0)$, $(1,-1,0,0)$, $(1,-1,-1,-1)$.
We have chosen the first three  as initial values. The case
$\mathbf k=(0,0,0,0)$ of \eqref{kmb} expresses a point in the fourth orbit
in terms of the other three.
\end{proof}


\begin{thebibliography}{99}






\bibitem[BM1]{bm1}  V.\ V.\ Bazhanov and V.\ V.\ Mangazeev, \emph{Eight-vertex model and non-stationary Lam\'e equation}, J.\ Phys.\ A  38  (2005),  L145--L153. 

\bibitem[BM2]{bm2}  V.\ V.\ Bazhanov and V.\ V.\ Mangazeev, \emph{The eight-vertex 
model and Painlev\'e VI},  J.\ Phys.\ A  39  (2006),  12235--12243. 




\bibitem[BH]{bh} M.\ Beccaria and C.\ Hagendorf,
\emph{A staggered fermion chain with supersymmetry on open intervals},
J.\ Phys.\ A 45 (2012), 365201. 


\bibitem[B]{b} D. Bernard, \emph{On the Wess--Zumino--Witten models on Riemann surfaces}, Nucl.\ Phys.\ B 309 (1988), 145--174.

\bibitem[D]{d} G.\ Darboux, \emph{Sur une \'equation lin\'eaire}, C.\ R.\ Acad.\ Sci.\ Paris 94 (1882), 1645--1648.

\bibitem[EK]{ek} P.\ I.\ Etingof and A.\ A.\ Kirillov,
\emph{Representations of affine Lie algebras, parabolic differential equations, and Lam\'e functions}, Duke Math.\ J.\ 74 (1994),  585--614.

\bibitem[ER]{er} B.\ Eynard and S.\ Ribault,
\emph{Lax matrix solution of $c=1$ conformal field theory},
J.\ High Energy Phys.\ 2014 (2014): 59.

\bibitem[F]{fl} V.\ A.\ Fateev, A.\ V.\ Litvinov, A.\ Neveu and E.\ Onofri, 
\emph{A differential equation for a four-point correlation function in Liouville field theory and elliptic four-point conformal blocks},
J.\ Phys.\ A 42 (2009), 304011.


\bibitem[FH]{fh} P.\ Fendley and C.\ Hagendorf
\emph{Ground-state properties of a supersymmetric fermion chain},
J.\ Stat.\ Mech.\ (2011), P02014.




\bibitem[GIL]{gil} O.\ Gamayun, N.\  Iorgov and O.\ Lisovyy, \emph{Conformal field theory of Painlev\'e VI}, J.\ High Energy Phys.\ 2012,  038.

\bibitem[H]{h} C.\ Hagendorf, \emph{Spin chains with dynamical lattice supersymmetry},  J.\ Stat.\ Phys.\ 150 (2013), 609--657. 



\bibitem[I]{i} V.\ I.\ Inozemtsev, \emph{Lax representation with spectral parameter on a torus for integrable particle systems}, Lett.\ Math.\ Phys.\ 17 (1989),  11--17. 

\bibitem[K]{ko} S.\ Kolb, \emph{Radial part calculation for
$\widehat{\mathfrak{sl}}_2$ and the Heun KZB-heat equation},
arXiv:1310.0782. 

\bibitem[LT]{lt} E.\ Langmann and K.\ Takemura, \emph{Source identity and kernel functions for Inozemtsev-type systems}, J.\ Math.\ Phys.\ 53 (2012), 082105.


\bibitem[MB]{bm4} V.\ V.\ Mangazeev and V.\ V.\ Bazhanov, \emph{Eight-vertex model
 and Painlev\'e VI equation.\ II.\ Eigenvector results},
J.\ Phys.\ A 43 (2010),  085206. 

\bibitem[N1]{na} H.\ Nagoya, \emph{A quantization of the sixth Painlev\'e equation}, in Noncommutativity and Singularities, pp.~291--298,  Math.\ Soc.\ Japan, Tokyo, 2009.



\bibitem[N2]{na2}
H.\ Nagoya, \emph{Hypergeometric solutions to Schr\"odinger equations for the quantum Painlev\'e equations}, J.\ Math.\ Phys.\ 52 (2011),  083509.

\bibitem[No]{no} 
D.\ P.\ Novikov, \emph{The Schlesinger system with $2\times 2$ matrices and the Belavin--Polyakov--Zamolodchikov equation}, 
Theor.\  Math.\ Phys.\ 161 (2009), 1485--1496. 

\bibitem[RS2]{ras} 
A.\ V.\ Razumov and Yu.\ G.\ Stroganov, \emph{A possible combinatorial point for the XYZ spin chain},
Theor.\  Math.\ Phys.\ 164 (2010), 977--991.



\bibitem[R1]{r0} H.\ Rosengren, 
\emph{
An Izergin--Korepin-type identity for the \emph{8VSOS} model, with applications to alternating sign matrices}, Adv.\ Appl.\ Math.\ 43 (2009), 137--155.

\bibitem[R2]{r} H.\ Rosengren, 
\emph{The three-colour model with domain wall boundary conditions}, Adv.\ Appl.\ Math.\ 46 (2011), 481--535.

\bibitem[R3]{r1a} H.\ Rosengren, 
\emph{Symmetric polynomials related to the supersymmetric eight-vertex model. I.\ Behaviour at cusps}, arXiv:1305.0666.

\bibitem[R4]{r2} H.\ Rosengren, 
\emph{Symmetric polynomials related to the supersymmetric eight-vertex model. III.\ Painlev\'e VI equation}, in preparation.


 

\bibitem[S1]{su}
B.\ I.\ Suleimanov,
\emph{The Hamilton property of Painlev\'e equations and the method of isomonodromic deformations}, Differ.\ Equ.\  30 (1994),  726--732.

\bibitem[S2]{su2}
B.\ I.\ Suleimanov,
\emph{``Quantum'' linearization of Painlev\'e equations as a component of their $L,\,A$ pairs}, Ufa Math.\ J.\ 4 (2012), 127--136.




\bibitem[V]{v} A.\ P.\ Veselov, \emph{On Darboux--Treibich--Verdier potentials}, Lett.\ Math.\ Phys.\ 96 (2011),  209--216.


\bibitem[WW]{ww}  E.\ T.\ Whittaker and G.\ N.\ Watson, 
A Course on Modern Analysis, 4th ed., Cambridge University 
Press,  1927 (reprinted 1996).

\bibitem[ZS]{zs} A.\ V.\ Zotov and A.\ V.\ Smirnov, 
\emph{Modifications of bundles, elliptic integrable systems, and related problems}, 
Theor.\ and Math.\ Phys.\ 177 (2013), 1281--1338.


\bibitem[ZZ]{z} A.\ Zabrodin and A.\ Zotov, \emph{Quantum Painlev\'e-Calogero correspondence for Painlev\'e VI}, J.\ Math.\ Phys.\ 53 (2012), 073508.


\bibitem[Z]{zj} P.\ Zinn-Justin, \emph{Sum rule for the eight-vertex model on its combinatorial line},  in Symmetries, Integrable Systems and Representations,  599--637, Springer-Verlag, 2013.

\vskip 3mm
\end{thebibliography}
 \end{document}